\newtheorem{theorem}{Lemma}
\newtheorem{prop}{Proposition}
\newtheorem{remark}{Remark}
\newcommand{\PK}{\color{blue}}
\newcommand\at[2]{\left.#1\right|_{#2}}
\DeclareMathOperator*{\argmax}{arg\,max}
\DeclareMathOperator*{\argmin}{arg\,min}
\pgfplotsset{compat=1.18}
\begin{document}

\title{Optimal Distortion-Aware Multi-User  Power Allocation for Massive MIMO Networks}

\author{Siddarth~Marwaha,~\IEEEmembership{Student Member,~IEEE,}
Pawel~Kryszkiewicz,~\IEEEmembership{Senior Member,~IEEE,}
        and~Eduard~Jorswieck,~\IEEEmembership{Fellow,~IEEE}
        \thanks{P. Kryszkiewicz is with the Institute of Radiocommunications, Poznan University of Technology, POLAND e-mail: pawel.kryszkiewicz@put.poznan.pl}%
        \thanks{S. Marwaha and E. Jorswieck are with the Institute of Communication Technology, University of Braunschweig, GERMANY email:\{s.marwaha, e.jorswieck\}@tu-braunschweig.de}%
        \thanks{The work of S. Marwaha was funded in part by BMDV 5G COMPASS project within InnoNT program under Grant 19OI22017A. Research of P. Kryszkiewicz was funded by the Polish National Science Center, project no. 2021/41/B/ST7/00136. For the purpose of Open Access, the author has applied a CC-BY public copyright license to any Author Accepted Manuscript (AAM) version arising from this submission. The work of E. Jorswieck was partially funded by BMBF 6G RIC project under Grant 16KISK031.
        }
        }

\maketitle

\begin{abstract}

Real-world wireless transmitter front-ends exhibit certain nonlinear behavior, e.g., signal clipping by a Power Amplifier (PA). Although many resource allocation solutions do not consider this for simplicity, it leads to inaccurate results or a reduced number of degrees of freedom, not achieving the global performance.
In this work, we propose an optimal PA distortion-aware power allocation strategy in a downlink orthogonal frequency division multiplex (OFDM) based massive multiple-input multiple-output (M-MIMO) system. 
Assuming a soft-limiter PA model, where the transmission occurs under small-scale independent and identically distributed (i.i.d) Rayleigh fading channel, we derive the wideband signal-to-noise-and-distortion ratio (SNDR) and formulate the power allocation problem. Most interestingly, the distortion introduced by the PA leads to an SNDR-efficient operating point without explicit transmit power constraints. 
While the optimization problem is non-convex, we decouple it into a non-convex total power allocation problem and a convex power distribution problem among the users (UEs). We propose an alternating optimization algorithm to find the optimum solution. Our simulation results show significant sum-rate gains over existing distortion-neglecting solutions, e.g., a median 4 times increase and a median 50\% increase for a 64-antenna and 512-antenna base station serving 60 users, respectively.

\end{abstract}

\begin{IEEEkeywords}
massive MIMO, power allocation, sum-rate maximization, alternating optimization
\end{IEEEkeywords}

\IEEEpeerreviewmaketitle

\section{Introduction}

Massive multiple input multiple output (M-MIMO) has been shown to increase throughput and enable massive multiple access \cite{Bjornson_2018_MIMO_unlimited}, however, under the assumption that hundreds of parallel high-quality front-ends are employed. Not only does this require expensive hardware, but it also results in high energy consumption, which can be justified by considering a single Power Amplifier (PA) in the M-MIMO transmitter. Typically, a multi-carrier waveform is used for M-MIMO processing, however, this results in high variations of the instantaneous signal power \cite{ochiai2013analysis}, measured typically by the Peak-to-Average-Power Ratio (PAPR). Unfortunately, as such a signal passes through a practical PA, i.e., a signal with limited maximal output power, some of its samples can be distorted, introducing signal distortion in each transmission chain. From the Signal-to-Distortion Ratio (SDR) perspective, even for the optimal PA, i.e., a soft-limiter that provides linear amplification till its clipping level, all the samples exceeding the clipping threshold will generate a non-linear distortion signal.
Fortunately, the clipping problem can be reduced by means of certain transmitter-side signal processing, e.g., reduction of PA input signal variation \cite{Kryszkiewicz_TR_2018}, or advanced reception algorithms \cite{Dinis_Qoptimal_RX_2024}. Additionally, this effect can be reduced by optimizing the transmit power in relation to the PA clipping level. A higher signal-to-noise ratio (SNR) can be achieved by allocating higher transmit power, however, this results in significant signal clipping, thereby reducing the SDR. Therefore, considering the PA model, there is significant potential for wireless link optimization by distortion aware power allocation.  

A few solutions have been proposed to address this issue in Single Input Single Output (SISO) multi-carrier systems. In \cite{Baghani_IMD3_power_allocation_wideband_2014}, the authors focused on optimizing total transmit power in a cognitive radio system based on orthogonal frequency division multiplexing (OFDM), where the PA is modeled using a third-order memoryless polynomial. Moreover, equal power is allocated on each utilized sub-carrier. The authors assumed that the non-linear distortion power is equally distributed among sub-carriers, which is only approximately true as demonstrated in \cite{lee2014characterization}. A similar approach has been applied to a cognitive raido system based on generalized frequency division multiplexing (GFDM) in \cite{Amirhossein_2019_power_allocation_GFDM}. \textcolor{blue}{Furthermore, in \cite{10697405}, a similar third-order polynomial-based PA model—fitted to real-world measurements—was adopted to optimize transmission power for maximizing throughput in the uplink of a Non Orthogonal Multiple Access (NOMA) system. However, the third-order polynomial PA model is relatively simple and thus does not completely reflect the non-linear distortion power in the entire range of PA back-off values \cite{lee2014characterization}. Typically maximum transmission power constraint is required. A more suitable option in this matter is a soft-limiter PA model.} 

For a digital M-MIMO system, where each front-end is equipped with a PA and transmits multi-carrier waveform, a similar phenomenon as in the SISO system occurs. If the PA-distorted signal is decomposed into the desired signal and distortion signal, the key issue lies in how the distortion signals from multiple transmitting antennas are added at the reception point. Initially, it was assumed that non-linear distortions from multiple front-ends are uncorrelated, allowing them to be treated as additional white noise emitted from each transmitter antenna \cite{bjornson2014massive}. However, further studies revealed that this is not always the case. The generalized analytical framework for power spectral density (PSD) derivation for M-MIMO, combined with the discussion in \cite{mollen_spatial_char}, show that in some cases, e.g., Line-of-Sight (LoS) channel, the non-linear distortion can be directed towards the intended receiver, preventing the SDR to increase with the number of antennas. Fortunately, as recently shown  in \cite{Salman_OFDM_MIMO_PA_modelin_2023}, this problem vanishes as the channel impulse response prolongs. Therefore, for small-scale Rayleigh fading channel, we assume uncorrelated noise-like distortion from each front-end.   

Until now, there is a limited set of power allocation solutions for M-MIMO systems. 
In \cite{Liu_2021_MMIMO_impairments}, the energy efficiency of an M-MIMO system was maximized considering hardware impairments. However, it was assumed that a narrowband signal is transmitted and that nonlinear distortion power is linearly dependent on the allocated power, similar to \cite{bjornson2014massive}. As discussed in \cite{bjornson2014massive} (Sec. VIIB) the distortion power proportional to the wanted signal power is just a local approximation in a linear range of the PA. These assumptions are far from practical applications of multi-carrier system or full utilization of the PA properties, e.g., operation close to the clipping region.
In \cite{persson2013amplifier}, power allocation per antenna is optimized for a single user equipment (UE) considering the power consumed by PAs. Unfortunately, this does not consider OFDM signal transmission, or signal distortion on a PA. In \cite{Liu_2024_allocation_DMIMO}, power allocation in the downlink (DL) of multiple Distributed MIMO base stations (BSs) in a LoS environment under non-linear PAs is considered. Unfortunately, the study utilizes a particular and limited-order memoryless PA model that requires a total power consumption constraint in the optimization problem definition. Moreover, this study only considers the presence of a single UE in the system. 
\textcolor{blue}{In \cite{hoffmann2023contextual}, transmit power of a M-MIMO system under analog beamforming and soft limiter PA was optimized. In this setup, distortion is directed towards the intended UE, resulting in significant performance degradation. However, while the system serves a single UE at a time, it is not optimizing inter-UE power distribution.} 
Finally, in \cite{Jee_2021_MISO_PA_power_allocation} the authors consider power allocation and precoder design in multi-user DL MIMO transmission. Although the paper provides interesting results, the nonlinearity is modeled only by third-order intermodulation, which is a significant simplification  \cite{lee2014characterization} because higher-order intermodulations are not considered. 
Probably for this reason, the authors had to add a constraint on the total allocated power. Moreover, a single-carrier transmission, though of complex Gaussian signal sample distribution, is considered omitting all difficulties related to a typical scenario of multicarrier MIMO system design.   

In this work, the main contributions are summarized as follows: 
\begin{itemize}
    \item We propose an optimal PA distortion-aware power allocation strategy in a DL M-MIMO OFDM system, utilizing a zero-forcing (ZF) precoder, where the small-scale fading is modeled as a Rayleigh fading channel, and multiple UEs are allocated as multiple layers of a M-MIMO system utilizing all available sub-carriers. We assume that the PA is modeled by a soft-limiter model, which allows for analytical formulation. Most interestingly, the proposed strategy does not need the total transmit power to be constrained because the distortion introduced by the PA self-constrains the result. 
    \item We maximize the sum-rate of the considered system, where we propose an alternating optimization framework to solve the non-convex optimization problem. This enables us to solve two sub-problems, namely, the non-convex total power allocation sub-problem and a convex power distribution among UEs sub-problem. Utilizing the properties of a single variable non-convex sub-problem, we propose an efficient numerical solution for finding a stationary point.
    \item Our proposed solution goes beyond the state-of-the-art in wireless system optimization. In contemporary systems, the in-band distortions are limited at the transmitter, e.g., by setting a limit on Error Vector Magnitude (EVM). Our approach removes this constraint, allowing nonlinear distortion to be introduced at the transmitter as long as the link capacities from the receiver's perspective are optimized. The nonlinear distortion is acceptable, similar to inter-cell interference in cellular systems. The simulation results show that the proposed solution gains the most over legacy systems in the high-range scenario, i.e. when the path loss is substantial relative to the maximum PA emitted power. Therefore, the solution is beneficial for extremely low-power M-MIMO BSs or to support far-away UEs from standard-size and power BSs. For a single 64 and 512-antenna BS serving 60 UEs, a 4 times and 50\% median sum-rate gain over the reference equal power allocation solution is achieved, respectively.
\end{itemize}

The remainder of the paper is organized as follows: first, for the considered system, the wideband signal-to-noise-and-distortion ratio (SNDR) is derived in Section \ref{subsec:sindr}, with the signal model in Section \ref{subsec:sig_mod} and the impact of non-linear PA on the MIMO OFDM system in Section \ref{subsec: PA-NL}. 
Next, the power allocation problem is defined in Section \ref{sec:prob_form} and the alternating optimization solution is presented in Section \ref{sec:prob_sol}. Finally, our simulations in Section \ref{sec:results} show significant sum-rate gains over existing distortion-neglecting solutions. 

\section{System Model}
\label{sec:sys_mod}

This section explains the system model. It shows the influence of non-linear PA on the DL M-MIMO OFDM signal with digital precoding. Although the spatial character of nonlinear distortion significantly depends on the utilized precoder and the wireless channel properties\cite{mollen_spatial_char,Wachowiak2023}, we will focus on the most commonly considered  independent and identically distributed (i.i.d) Rayleigh fading for small-scale fading. 
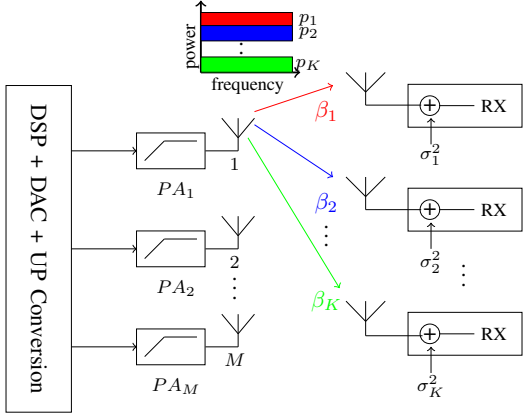
\begin{figure}
 \centering
 \resizebox{0.8\columnwidth}{!}{\pgfplotsset{compat=1.18}
\begin{tikzpicture}
\begin{scope}[yshift=0.3cm]
    \draw[->, thick] (2.48, 2.4) -- (4.0, 2.4); 
    \node at (3.2, 2.2) {\fontsize{8}{12}\selectfont frequency};
    \draw[->, thick] (2.48, 2.4) -- (2.48, 3.5); 
    \node[rotate=90] at (2.36, 2.8) {\fontsize{8}{12}\selectfont power};
    \node[draw, minimum width=1.4cm, minimum height=0.1cm, fill=red] at (3.18, 3.2){}; 
    \node at (4.13, 3.2) {\fontsize{8}{12}\selectfont $p_{1}$}; 
    \node[draw, minimum width=1.4cm, minimum height=0.1cm, fill=blue] at (3.18, 3.0){};
    \node at (4.13, 3.0) {\fontsize{8}{12}\selectfont $p_2$}; 
    \node at (3.1, 2.8) {$\vdots$};
    \node[draw, minimum width=1.4cm, minimum height=0.1cm, fill=green] at (3.18, 2.52){}; 
    \node at (4.13, 2.52){\fontsize{8}{12}\selectfont \text{$p_K$}}; 
\end{scope}
    
    \node[rectangle, draw, minimum width=1cm, minimum height=5cm, align=center] (box) at (0,0) {};
    
    \node[rotate=270] at (0,0) {DSP + DAC + UP Conversion};

    \draw[->] (0.5, 1.5) -- ++(1, 0); 
    \node[rectangle, draw, minimum width=0.8cm, minimum height=0.4cm] at (2.02, 1.5) {
    \begin{tikzpicture}
        \draw[domain=2.7:3, samples=50, color=black] plot (\x, \x); 
        \draw[black] (3, 3) -- (3.5,3);
    \end{tikzpicture}}; 
    \draw[-] (2.55, 1.5) -- ++(0.5,0);
    \draw[-] (3.05, 1.5) -- (3.05, 2.0);
    \draw[-] (3.05, 1.7) -- (3.3, 2.0);
    \draw[-] (3.05, 1.7) -- (2.8, 2.0);

    \draw[->] (0.5, 0) -- ++(1, 0); 
    \node[rectangle, draw, minimum width=0.8cm, minimum height=0.4cm] at (2.02, 0) {
    \begin{tikzpicture}
        \draw[domain=2.7:3, samples=50, color=black] plot (\x, \x); 
        \draw[black] (3, 3) -- (3.5,3);
    \end{tikzpicture}}; 
    \draw[-] (2.55, 0) -- ++(0.5,0);
    \draw[-] (3.05, 0) -- (3.05, 0.5);
    \draw[-] (3.05, 0.2) -- (3.3, 0.5);
    \draw[-] (3.05, 0.2) -- (2.8, 0.5);

    \draw[->] (0.5, -1.5) -- ++(1, 0); 
    \node[rectangle, draw, minimum width=0.8cm, minimum height=0.4cm] at (2.02, -1.5) {
    \begin{tikzpicture}
        \draw[domain=2.7:3, samples=50, color=black] plot (\x, \x); 
        \draw[black] (3, 3) -- (3.5,3);
    \end{tikzpicture}}; 
    \draw[-] (2.55, -1.5) -- ++(0.5,0);
    \draw[-] (3.05, -1.5) -- (3.05, -1.0);
    \draw[-] (3.05, -1.3) -- (3.3, -1.0);
    \draw[-] (3.05, -1.3) -- (2.8, -1.0);

\node at (3.0, 1.3) {\fontsize{8}{12}\selectfont $1$};
\node at (2.1, 0.9) {\fontsize{8}{12}\selectfont $PA_1$};

\node at (3.0, -0.2) {\fontsize{8}{12}\selectfont $2$};
\node at (2.1, -0.6) {\fontsize{8}{12}\selectfont $PA_2$};

\node at (3.0, -0.5) {$\vdots$};

\node at (3.0, -1.7) {\fontsize{8}{12}\selectfont $M$};
\node at (2.1, -2.1) {\fontsize{8}{12}\selectfont $PA_M$};

\draw[->, red](3.3, 2.1) -- (4.5, 2.5);
\node at (4.4, 2.1) {\color{red}{$\beta_1$}};
\draw[->, blue](3.3, 1.9) -- (4.5, 1.1);
\node at (4.4, 0.7) {\color{blue}{$\beta_2$}};
\node at (4.4, 0.3) {$\vdots$};
\draw[->, green](3.2, 1.7) -- (4.6, -0.6);
\node at (4.4, -0.8) {\color{green}{$\beta_K$}};

    \begin{scope}[xshift=4cm] 
        \node[rectangle, draw, minimum width=0.4cm, minimum height=0.4cm] at (2.52, 2.19) {
            \begin{tikzpicture}
                 \node at (2.52, 1.5) {+};
                 \draw (2.52, 1.5) circle (0.15);
                 \draw[-] (2.66, 1.5) -- (3.2, 1.5);
                 \node at (3.5, 1.5) {\fontsize{8}{12}\selectfont RX};                 
            \end{tikzpicture}
        }; 
        \draw[<-] (2.0, 2.0) -- (2.0, 1.6);
        \node at (2.0, 1.45) {\fontsize{8}{12}\selectfont $\sigma_1^2$};
        \draw[-] (1.0, 2.2) -- (1.85, 2.2);
        \draw[-] (1.0, 2.2) -- (1.0, 2.7);
        \draw[-] (1.0, 2.4) -- (1.3, 2.7);
        \draw[-] (1.0, 2.4) -- (0.7, 2.7);

        \node[rectangle, draw, minimum width=0.4cm, minimum height=0.2cm] at (2.52, 0.60) {
            \begin{tikzpicture}
                \node at (2.52, 0) {+};
                \draw (2.52, 0) circle (0.15);
                \draw[-] (2.66, 0) -- (3.2, 0);
                \node at (3.5, 0) {\fontsize{8}{12}\selectfont RX};                              
            \end{tikzpicture}
        }; 
        \draw[<-] (2.0, 0.4) -- (2.0, 0.0);  
        \node at (2.0, -0.2) {\fontsize{8}{12}\selectfont $\sigma_2^2$};
        \draw[-] (1.0, 0.6) -- (1.85, 0.6);
        \draw[-] (1.0, 0.6) -- (1.0, 1.1);
        \draw[-] (1.0, 0.8) -- (1.3, 1.1);
        \draw[-] (1.0, 0.8) -- (0.7, 1.1);
        
\node at (2.5, -0.3) {$\vdots$};

        \node[rectangle, draw, minimum width=0.4cm, minimum height=0.4cm] at (2.52, -1.3) {
            \begin{tikzpicture}
               \node at (2.52, -1.5) {+};
                \draw (2.52, -1.5) circle (0.15);
                 \draw[-] (2.66, -1.5) -- (3.2, -1.5);
                 \node at (3.5, -1.5) {\fontsize{8}{12}\selectfont RX};                
            \end{tikzpicture}
        }; 
        \draw[<-] (2.0, -1.5) -- (2.0, -1.9);
         \node at (2.0, -2.1) {\fontsize{8}{12}\selectfont $\sigma_K^2$};
        \draw[-] (1.0, -1.33) -- (1.87, -1.33);
        \draw[-] (1.0, -1.33) -- (1.0, -0.8);
        \draw[-] (1.0, -1.11) -- (1.3, -0.8);
        \draw[-] (1.0, -1.11) -- (0.7, -0.8);
    \end{scope}

\end{tikzpicture}

 \caption{System model: $M$ transmit chains, including common digital signal processing (DSP), digital-to-analog converter (DAC) and up conversion, serving, by ZF precoding, $K$ UEs with $k$-th UE having $\beta_k$ mean channel gain and total (over all antennas) allocated power $p_k$.}
 \label{fig:model}
 \end{figure}  
 
\subsection{Signal Model}
\label{subsec:sig_mod}

As shown in Figure \ref{fig:model},  let us assume that the M-MIMO transmitter is composed of $M$ transmit chains. Each transmit chain utilizes $N$-point Inverse Fast Fourier Transform (IFFT), included in DSP block in Fig. \ref{fig:model}, for modulation. There are $N_{\mathrm{U}}\leq N$ sub-carriers modulated with complex symbols $x_{m,n}$, where $n\in \{1,...,N_{\mathrm{U}}\}$ and $m\in \{1,...,M\}$. \textcolor{blue}{It is assumed that $K$ UEs are simultaneously served  on all sub-carriers using digital precoding. Specifically, the BS transmits a QAM\footnote{\textcolor{blue}{Note that the derived results do not rely on the specific modulation type but on the time-domain statistical behavior of the OFDM signal, and therefore hold for, e.g., QAM, PSK, APSK schemes if OFDM is used with a sufficient number of subcarriers.}} symbol $s_{k,n}$ intended for the $k$-th UE on $n$-th sub-carrier, weighted by the respective precoding coefficient $w_{m,k,n}$. The resulting signal transmitted on $n$-th sub-carrier from $m$-th front-end is}
\begin{equation}
    x_{m,n}=\sum_{k=1}^{K}w_{m,k,n}s_{k,n}.
    \label{eq_precoding}
\end{equation}
Furthermore, we assume that the $k$-th UE is allocated mean power $p_k$ that can be calculated over all antennas and all sub-carriers as
\begin{equation}
\label{eq_UE_power}
    p_k=\sum_{n=1}^{N_{\mathrm{U}}}
    \sum_{m=1}^{M}\mathbb{E}\left[ \left| w_{m,k,n}s_{k,n}\right|^2 \right].
\end{equation}
The precoded signal undergoes inverse Fourier transformation creating the signal sample for the $m$-th front-end at $t$-th time sample ($t\in \{-N_{\mathrm{CP}},...,N-1\}$ as
\begin{equation}
   y_{m,t}= \sum_{n=1}^{N_{\mathrm{U}}} x_{m,n} e^{j2\omega\frac{n}{N}t},
   \label{eq_precoded_signal}
\end{equation}
where $N_{\mathrm{CP}}$ is the length of OFDM cyclic prefix (CP) in samples.
Without loss of generality, a single OFDM symbol is considered. 

\subsection{Influence of non-linear PA on M-MIMO OFDM system}
\label{subsec: PA-NL}
The signals, separately for each front-end, after digital-to-analog conversion and up-conversion to the carrier frequency, pass through the PAs. 
\textcolor{blue}{While there are multiple behavioral models available, not only has the Rapp model been suggested for contemporary Solid State Power Amplifiers \cite{ochiai2013analysis}, but it has also been suggested for the evaluation of $5$G New Radio \cite{Nokia_3gpp_Rapp}.
For the sampled signal $y_{m,t}$, its operation can be modeled as
\begin{equation}
\hat{y}_{m,t}=\frac{y_{m,t}}{\left(1+\frac{|y_{m,t}|^{2p}}{P_{\mathrm{max}}^{p}}\right)^{\frac{1}{2p}}},
\label{eq_Rapp_output}
\end{equation}
where $P_{\mathrm{max}}$ is the saturation power of the power amplifier, which is assumed to be equal for each antenna, though an extension to an unequal case is trivial. Furthermore, $p$ is the smoothness parameter, where $p=2$ is suggested for common PAs \cite{ochiai2013analysis}. Most interestingly, when $p\to \infty$, the Rapp model is equivalent to a soft-limiter PA model, which minimizes non-linear distortion power for OFDM \cite{raich2005_optimal_nonlinearity}. Even if the PA is characterized by some less preferred non-linear characteristics, e.g. $p<<\infty$, it can be assumed that some digital pre-distortion (DPD) has been applied, and as a result the effective characteristic of both elements is close to the soft limiter \cite{Joung_2015_survey_EE_PA}. The soft limiter operation can be modeled as
\begin{equation}
    \hat{y}_{m,t} = \begin{cases} y_{m,t} & \mathrm{for} \left|y_{m,t}\right|^2 \leq P_{\mathrm{max}} \\
                  \sqrt{P_{\mathrm{max}}}e^{j \arg{(y_{m,t})}} & \mathrm{for} \left|y_{m,t}\right|^2 > P_{\mathrm{max}} \end{cases},
                  \label{eq_softlimiter}
\end{equation}
where $\arg(~)$ denotes the argument of a complex number. 
}

While this model has unitary small-signal gain, to focus on non-linear distortion, the linear signal amplification can be \emph{hidden}, e.g., in the precoding coefficients. A typical measure for an operation point of the PA is the Input Back-Off (IBO), which is defined as a ratio of saturation power and the mean input power. \textcolor{blue}{As} we assume i.i.d Rayleigh channel, the precoding should keep equal power distribution between antennas, i.e., $\sum_{k}p_{k}/M$. In such a case, the IBO
\begin{equation}
    \Psi =\frac{P_{\mathrm{max}}}{1/M\sum_{k}p_{k}}.
    \label{eq_IBO_def}
\end{equation}
While the $t$-th sample of the signal $y_{m,t}$ is a sum of $N_{\mathrm{U}}$ active sub-carriers, typically $N_{\mathrm{U}}\gg1$, being modulated with independent random symbols, the Central Limit Theorem can be used to justify that $y_{m,t}$ is complex normal distributed. A more formal justification, even for some special cases like unequal power loading, can be found in \cite{Wei_2010_dist_OFDM}. In such a case, using the Bussgang theorem, the signal at the output of non-linearity can be decomposed as a scaled input and the uncorrelated distortion as
\begin{equation}
\label{eq:bussgang_decomposition}
\hat{y}_{m,t} = \sqrt{\lambda} y_{m,t} + \bar{d}_{m,t}.
\end{equation}
\textcolor{blue}{
Based on the distribution of $y_{m,t}$ amplitudes and the Rapp model in \eqref{eq_Rapp_output}, the signal power scaling factor is defined as 
\cite{Nossek2011power,kryszkiewicz2023efficiency}
\begin{equation}
    \lambda\!=\!\!\left(\!\!\frac{
\mathbb{E} \left[  \hat{y}_{m,t}y_{m,t}^{*} \right]
    }{\mathbb{E} \left[ \left| y_{m,t} \right|^2\right]}\! \right)^2\!\!\!\!\!=
   \left( \int_{0}^{\infty}\frac{2\eta^3}{\left(1+\frac{\eta^{2p}}{\Psi^p} \right)^{\frac{1}{2p}}} e^{-\eta^2} d\eta\right)^2,
    \label{eq_lambda_Rapp}
\end{equation}
where numerical integration is required. However, for the soft-limiter PA an analytical result exist, defining the signal power scaling factor as
\begin{equation}
    \lambda\!=
    \!\left(\!1\!-\!e^{-\Psi}\!\!+\!\!\frac{1}{2}\sqrt{\pi \Psi} \mathrm{\textrm{erfc}}\left(\sqrt{\Psi}\right)\!\right)^2\!,
    \label{eq_lambda}
\end{equation}
with  $\lambda \in (0,1)$.
With the same assumptions, the mean non-linear distortion power at each antenna can be calculated for Rapp PA as \cite{kryszkiewicz2023efficiency}
\footnote{\textcolor{blue}{Observe that the nonlinear distortion model can be easily extended to a non-uniform per antenna power, e.g., as a result of varying path loss over the BS antennas, requiring, $\lambda$, $\Psi$ and $\mathbb{E} \left[ \left| \bar{d}_{m,t} \right|^2\right]$ to become front-end specific.} } 
\begin{equation}
\mathbb{E} \left[ \left| \bar{d}_{m,t} \right|^2\right]=\left(
\int_{0}^{\infty}
    \frac{2 \eta^3}{\left(1+\frac{\eta^{2p}}{\gamma^p} \right)^{\frac{1}{p}}} e^{-\eta^2} d \eta
    -\lambda \right)\sum_{k}\frac{p_{k}}{M},
\label{eq_distortion_SISO_Rapp}
\end{equation}
where again numerical integration is required. In the case of soft-limiter, \eqref{eq_distortion_SISO_Rapp} simplifies to
\begin{equation}
\mathbb{E} \left[ \left| \bar{d}_{m,t} \right|^2\right]=\left(1-e^{-\Psi}-\lambda \right)\sum_{k}\frac{p_{k}}{M}.
\label{eq_distortion_SISO}
\end{equation}
While the Rapp model is not analytically tractable, our derivations focus on soft-limiter PA only for analytically tractability. However, the results for Rapp model are shown in Sec. \ref{subsec:rapp_compare} as well.
}

The signal is transmitted from each antenna reaching $k$-th single antenna UE via $M$ frequency selective channels. After time and frequency synchronization is obtained, the CP is removed and $N$ samples of the received signal undergo FFT processing resulting in a signal on $n$-th sub-carrier of $k$-th UE as
\begin{equation}
\label{rx_sig_freq_domain}
    r_{k,n}=\sum^{M}_{m=1} \mathcal{F}_{[n,t=0,...,N-1]}\{\hat{y}_{m,t}\}h_{m,k,n} + q_{k,n},
\end{equation}
where $\mathcal{F}_{[n,t=0,...,N-1]}$ denotes discrete Fourier transform (DFT) over time samples $0,...,N-1$ at sub-carrier $n$. In addition, $h_{m,k,n}$ denotes channel frequency response on $n$-th sub-carrier for $m$-th antenna and $k$-th UE, and \textcolor{blue}{$q_{k,n}$ is the Additive White Gaussian Noise (AWGN) sample of power $\sigma_{k}^2/N_{\mathrm{U}}$.} Although the channels are assumed to be i.i.d Rayleigh, the large-scale fading can vary between UEs and equals $\beta_k=\mathbb{E}\left[ \left| h_{k,n}\right|^2\right]$. After substitution of (\ref{eq:bussgang_decomposition}) and (\ref{eq_precoded_signal}), the received signal can be decomposed as
\begin{subequations}
\begin{align}
    r_{k,n} &= \sqrt{\lambda} s_{k,n} \sum_{m=1}^{M} h_{m,k,n} w_{m,k,n} \label{rx_sig_freq_domain_decomp1} \\
    &\quad + \sqrt{\lambda} \sum_{\tilde{k}=1, \tilde{k}\neq k}^{K} s_{\tilde{k},n} \sum_{m=1}^{M} h_{m,k,n} w_{m,\tilde{k},n} \label{rx_sig_freq_domain_decomp2} \\
    &\quad + \sum^{M}_{m=1} d_{m,n} h_{m,k,n} \label{rx_sig_freq_domain_decomp3} \\
    &\quad + {\PK q_{k,n}}, \label{rx_sig_freq_domain_decomp4}
\end{align}
\end{subequations}
where $d_{m,n}=\mathcal{F}_{[n,t=0,...,N-1]}\{\bar{d}_{m,t}\}.$

Observe that (\ref{rx_sig_freq_domain_decomp1}) is the signal of interest that undergoes amplification as a result of precoding, described by $\sum_{m=1}^{M}h_{m,k,n}w_{m,k,n}$, and potential attenuation, by a common factor $\sqrt{\lambda}$, caused by the PAs. The component \eqref{rx_sig_freq_domain_decomp2} denotes linear inter-UE interference that is zeroed by employing ZF precoding if the number of antennas is sufficiently large with respect to number of allocated UEs \cite{massivemimobook}.
Finally, (\ref{rx_sig_freq_domain_decomp3}) denotes the non-linear distortion that is a sum over all $M$ transmitting antennas. Based on the Bussgang decomposition, this signal is uncorrelated with $s_{k,n}$. However, the important question is on the directivity of the distortion.   

In \cite{bjornson2014massive}, it is assumed  that the distortion on each front-end is independent white noise, however, this cannot be true for many practical precoders or channels. In \cite{mollen_spatial_char}, the authors derived an analytical formula for calculating the PSD of both the wanted signal and distortion, assuming generalized PA and transmitter structure, which covers both single and multi-carrier transmitter. The numerical examples show that for a single UE and LoS transmission, the distortion can have the same precoding gain as the wanted signal. In this case, an increase in the number of antennas does not improve the SDR. Fortunately, if the number of significant (of high relative gain) channel paths is high\footnote{More specifically, for a single UE, the square of the number of taps has to be at least greater than the number of antennas \cite{mollen_spatial_char}.}, the distortion becomes omnidirectional as well. Since we consider i.i.d Rayleigh fading with independent channel coefficients for every sub-carrier out of $N_{\mathrm{U}}$, this condition is to be met in all practical scenarios. Moreover, the distortion becomes omnidirectional even faster if the number of allocated UEs is higher. As such modeling of distortion as omnidirectionally emitted white noise is allowed in our case. 

\subsection{M-MIMO SINDR definition}
\label{subsec:sindr}

It is assumed that the number of antennas is much larger than the number of UEs allocated in parallel, i.e., $M \gg K$, allowing us to assume that channel hardening occurs {\cite{massivemimobook}}. Therefore, the frequency-selective effect of fading on a particular sub-carrier can be ignored, allowing to represent the wireless channel only by the large-scale fading $\beta_k$. Additionally, it is reasonable to assume that all UEs utilize all the available $N_{\mathrm{U}}$ sub-carriers.

If we do not consider non-linear PA, the coefficient $\lambda$ becomes one, and the component (\ref{rx_sig_freq_domain_decomp3}) vanishes. For ZF precoding, the component (\ref{rx_sig_freq_domain_decomp2}) vanishes as well, resulting in the $k$-th UE Signal-to-Interference and Noise power Ratio (SINR) formula\footnote{With Maximal Ratio Transmission (MRT) precoding, \eqref{eq: SINR_perfect_ZF} changes to 
\begin{equation}
    \tilde{\gamma}^{\mathrm{MRT, perfect PA}}_k =  \frac{M\cdot p_{k}\cdot \beta_{k}}{\sigma_k^2 + \beta_{k} \sum_{\tilde{k}\neq k}p_{\tilde{k}} }. 
\label{eq: SINR_perfect_MRT}
\nonumber
\end{equation}
} \cite{massivemimobook} 
\begin{equation}
    \tilde{\gamma}^{\mathrm{ZF,perfect PA}}_k =  \frac{(M - K) \cdot p_{k}\cdot \beta_{k}}{\sigma_k^2}, 
\label{eq: SINR_perfect_ZF}
\end{equation}
where $\sigma_k^2$ is the AWGN noise power over all $N_{\mathrm{U}}$ sub-carriers and recall that $p_k$ is defined in \eqref{eq_UE_power}.

While soft-limiter PA characteristic is considered, the following lemma can be used to calculate Signal to Interference, Noise and Distortion power Ratio (SINDR):
\begin{theorem}
The $k$-th UE SINDR under soft-limiter PA and ZF precoding can be calculated as
\footnote{If MRT precoding is considered \eqref{eq: SINR3} changes to 
\begin{equation}
    \gamma_k^{\mathrm{MRT}} =  \frac{M\cdot \lambda \cdot p_{k}\cdot \beta_{k}}{\sigma_k^2 + \beta_{k}D+\beta_{k}\lambda \sum_{\tilde{k}\neq k}p_{\tilde{k}} }.
    \nonumber
\label{eq: SINR_MRT}
\end{equation}
}
\begin{equation}
    \gamma_k^{\mathrm{ZF}} =  \frac{(M - K)\cdot \lambda \cdot p_{k}\cdot \beta_{k}}{\sigma_k^2 + \beta_{k}D }.
\label{eq: SINR3}
\end{equation}
\end{theorem}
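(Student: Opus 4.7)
The plan is to start from the four-term Bussgang decomposition (\ref{rx_sig_freq_domain_decomp1})--(\ref{rx_sig_freq_domain_decomp4}), compute the expected power of each component on a per-sub-carrier basis, aggregate over the $N_{\mathrm{U}}$ active sub-carriers, and then form the SINDR ratio. The first step is to invoke the defining property of the ZF precoder, $\sum_{m=1}^{M} h_{m,k,n} w_{m,\tilde{k},n}=0$ for $\tilde{k}\neq k$, which zeros the linear inter-UE interference term (\ref{rx_sig_freq_domain_decomp2}); this is where the hypothesis $M\gg K$ enters, through the invertibility of $\mathbf{H}\mathbf{H}^{H}$.

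For the useful signal (\ref{rx_sig_freq_domain_decomp1}), I would apply the standard ZF-with-channel-hardening result \cite{massivemimobook} to $\mathbb{E}\bigl[\bigl|\sum_{m} h_{m,k,n} w_{m,k,n}\bigr|^{2}\bigr]$. When the precoder is normalized so that the total transmit power allocated to user $k$ over all antennas and sub-carriers equals $p_{k}$ as in (\ref{eq_UE_power}), the expected effective array gain equals $(M-K)\beta_{k}$; multiplying by the Bussgang factor $\lambda$ from (\ref{eq:bussgang_decomposition}) and summing across sub-carriers gives $(M-K)\lambda p_{k}\beta_{k}$, which is the numerator of (\ref{eq: SINR3}). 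The AWGN term (\ref{rx_sig_freq_domain_decomp4}) sums directly to $\sigma_{k}^{2}$ by the definition of Section \ref{subsec:sig_mod}.

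The main obstacle is the distortion term (\ref{rx_sig_freq_domain_decomp3}). Relying on the omnidirectional-distortion justification of Section \ref{subsec: PA-NL}, I would treat the $d_{m,n}$ across antennas as mutually uncorrelated zero-mean random variables, so that $\mathbb{E}\bigl[\bigl|\sum_{m} h_{m,k,n} d_{m,n}\bigr|^{2}\bigr] = \sum_{m} \mathbb{E}[|h_{m,k,n}|^{2}]\,\mathbb{E}[|d_{m,n}|^{2}]$, which under channel hardening collapses to $M\beta_{k}\,\mathbb{E}[|d_{m,n}|^{2}]$. The per-sub-carrier frequency-domain variance $\mathbb{E}[|d_{m,n}|^{2}]$ is then tied to the closed-form per-antenna time-domain expression (\ref{eq_distortion_SISO}) through Parseval's identity, under the standard approximation of a flat in-band distortion PSD over the $N_{\mathrm{U}}$ used tones. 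Summing over sub-carriers yields a contribution of the form $\beta_{k}D$, identifying $D=(1-e^{-\Psi}-\lambda)\sum_{k} p_{k}$ as the aggregate received distortion power (per unit channel gain) produced by the $M$ power amplifiers across the active band.

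Assembling signal, distortion and noise in a ratio produces (\ref{eq: SINR3}). The delicate points are (i) carefully tracking the DFT normalization implicit in (\ref{eq_precoded_signal}) and (\ref{rx_sig_freq_domain}) when translating the time-domain variance (\ref{eq_distortion_SISO}) to its per-sub-carrier counterpart, and (ii) rigorously justifying that the per-antenna distortions remain uncorrelated enough after the frequency-selective channel to legitimize the sum-of-variances step; the remainder reduces to standard massive-MIMO algebra applied to the Bussgang decomposition.
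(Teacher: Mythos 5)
Your overall route is the same as the paper's: ZF zeroes the linear inter-UE term \eqref{rx_sig_freq_domain_decomp2}, the Bussgang scaling $\sqrt{\lambda}$ turns the useful-signal power into $(M-K)\lambda p_k\beta_k$ via the standard hardening/array-gain argument, the AWGN sums to $\sigma_k^2$, and the per-antenna distortions are treated as mutually uncorrelated so that their received powers add over the $M$ front-ends with mean gain $\beta_k$. Up to that point your proposal matches the paper's proof step for step.

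There is, however, one concrete gap in your identification of $D$. You translate the time-domain per-antenna distortion variance \eqref{eq_distortion_SISO} into a per-sub-carrier variance by Parseval ``under the standard approximation of a flat in-band distortion PSD over the $N_{\mathrm{U}}$ used tones,'' and conclude $D=(1-e^{-\Psi}-\lambda)\sum_k p_k$. This implicitly assumes that \emph{all} of the clipping distortion power lands on the $N_{\mathrm{U}}$ occupied sub-carriers. It does not: the nonlinear distortion exhibits spectral regrowth and occupies a bandwidth wider than the wanted signal, so a nontrivial fraction of $\mathbb{E}[|\bar d_{m,t}|^2]$ is emitted out of band (or removed by the front-end shaping filter). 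The paper bounds the in-band fraction by $\eta=2/3$ (citing \cite{lee2014characterization}) and accordingly sets $\mathbb{E}[|d_{m,n}|^2]=\eta\,\mathbb{E}[|\bar d_{m,t}|^2]$, which propagates into $D=\eta\left(1-e^{-\Psi}-\lambda\right)\sum_k p_k$ in \eqref{eq_per_UE_distortion_effective}. Your $D$ is therefore too large by the factor $1/\eta=3/2$, i.e.\ you overestimate the in-band distortion and underestimate the SINDR, and your expression is inconsistent with the $D$ that the lemma statement actually refers to. The fix is exactly the missing idea: partition the total Bussgang distortion power into its in-band and out-of-band components before applying the sum-of-variances step, keeping only the in-band fraction $\eta$. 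The remaining ``delicate points'' you flag (DFT normalization, residual correlation after the frequency-selective channel) are handled in the paper at the same level of rigor as in your sketch, so they are not where the discrepancy lies.
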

\begin{proof}
    Observe that if the non-linear PA is considered, the wanted signal and linear inter-UE interference is simply scaled by a factor $\sqrt{\lambda}$ as shown in (\ref{rx_sig_freq_domain_decomp1}) and (\ref{rx_sig_freq_domain_decomp2}), respectively. As such the power is scaled by $\lambda$. Furthermore, the non-linear distortion can be assumed to be omnidirectional, as explained in Section \ref{subsec: PA-NL}, allowing us to model it as Gaussian noise of power defined for each transmitting antenna by (\ref{eq_distortion_SISO}). However, it is important to understand that PA-based distortion is not frequency flat and has bandwidth wider than the wanted signal. As derived in \cite{lee2014characterization}, the total distortion power emitted in band of the transmission is no larger than $\eta=2/3$ of the total distortion power. The rest of the interference can be emitted in the adjacent band or attenuated by a spectrum shaping filter at the front-end, as required by a given communications system's spectrum emission mask. Therefore, the in-band distortion power at $m$-th front-end can be approximated as
\begin{equation}
    \mathbb{E}\left[\left|d_{m,n}\right|^2\right]=\eta \mathbb{E} \left[ \left| \bar{d}_{m,t} \right|^2\right].
\end{equation}
While there are $M$ sources of distortion whose signal add in an uncorrelated manner at $k$-th UE after transmission through channels of mean gain $\beta_k$, the total received distortion power is
\begin{equation}
   D_{k}=M \beta_k \eta \mathbb{E} \left[ \left| \bar{d}_{m,t} \right|^2\right].
   \label{eq_per_UE_distortion}
\end{equation}
Observe that the distortion power at transmitter antenna $m$ $\mathbb{E} \left[ \left| \bar{d}_{m,t} \right|^2\right]$ is independent from each individual UE channel gain $\beta_k$. This allows us to define the effective distortion power at the transmitter by combining (\ref{eq_per_UE_distortion}) and (\ref{eq_distortion_SISO}) while omitting $\beta_k$:
 \begin{equation}
   D= \eta \left(1-e^{-\Psi}-\lambda \right)\sum_{k}p_{k}. 
   \label{eq_per_UE_distortion_effective}
\end{equation}
With this definition, the SINDR for the ZF precoding\footnote{\textcolor{blue}{Here, the focus is on the ZF precoding as this configuration can gain the most from the distortion awareness, in comparison to the MRT case. The inter-UE linear interference can be much stronger than the nonlinear distortion when MRT precoding is employed, reducing the potential throughput gains.}} under non-linear PA can be computed as in (\ref{eq: SINR3}).
\end{proof}
Observe that this formula shows that both the desired signal and the distortion scale with the transmit power $p_k$ and $\sum_{k}p_{k}$, respectively, as in \cite{bjornson2014massive}. However, here the proportionality factors $\lambda$ and $\eta \left(1-e^{-\Psi}-\lambda \right)$ scale with the IBO $\Psi$, constituting a more general model in comparison to \cite{bjornson2014massive}, where constant proportionality coefficients are used. The authors in \cite{bjornson2014massive} (Sec. VII B) were aware that their model was limited only to the linear dynamic range of the amplifier or locally, when a very small variation of IBO is considered. 
\begin{remark}
     
The equations above use some approximations, i.e. assumption of channel hardening, and the upper bound of the in-band distortion power. The link level simulations have been carried out to show the accuracy of approximations. The SDR at the receiver $\gamma^{\mathrm{ZF}}_k$ for $\sigma_k^2=0$, was estimated while varying the IBO value for a various number of utilized sub-carriers, allocated UEs, and two considered precoders. The simulations have been carried out for the OFDM transmitter, utilizing $N=512$ sub-carriers with $N_{\mathrm{U}}=100$ sub-carriers modulated with random 16 PSK symbols. The power was equally divided among the UEs, i.e. $\forall_{k\neq \tilde{k}} p_k=p_{\tilde{k}}$. As expected, it can be observed from Fig. \ref{fig:SDR_plot} that
the higher the IBO (ergo: higher clipping power $P_{\mathrm{max}}$) the higher the SDR. Moreover, for a given number of allocated UEs, an increase in the number of antennas improves the SDR as a result of a higher precoding gain and higher wanted signal power. Most importantly, in all tested cases the analytical formula follows the simulation-based SDR result with an error no greater than 0.7 dB. 

\begin{figure}[!t]
\centering
\includegraphics[width=\columnwidth]{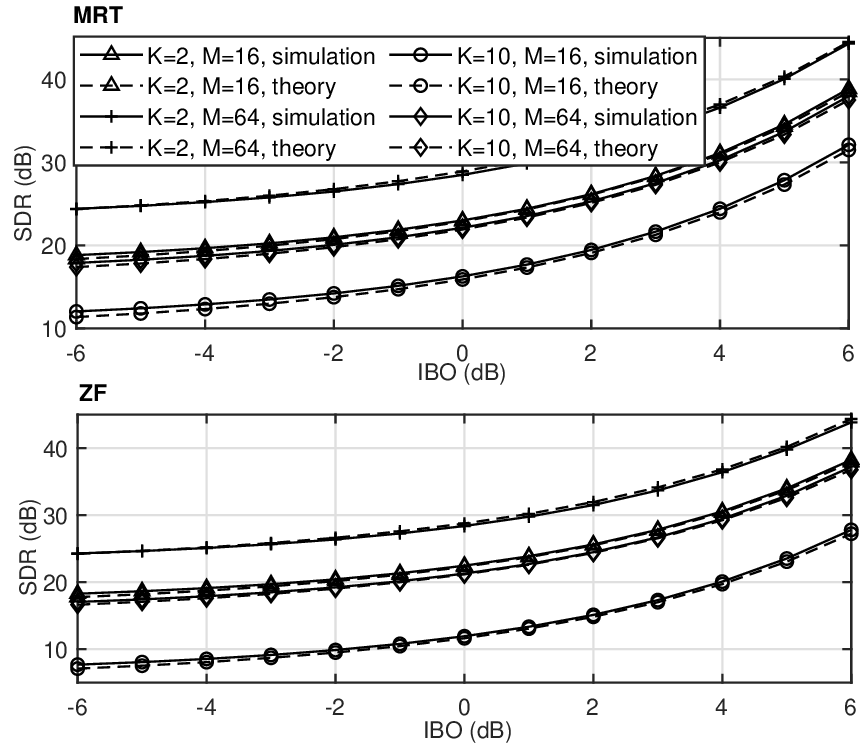}
\caption{In-band SDR for MRT and ZF precoding and various number of antennas (M) and UEs (K) obtained by simulation and analytical formulas.}
\label{fig:SDR_plot}
\end{figure}   
\end{remark}

In the following, we are interested in observing the impact of PA non-linearity on the sum-rate and therefore, we formulate the sum-rate maximization problem.  

\section{Problem Formulation}
\label{sec:prob_form}

In M-MIMO systems employing ZF precoding, the achievable rate for $k$-th UE in the DL, can be modeled with SINDR $\gamma_k^{ZF}$ from \eqref{eq: SINR3} as  
\begin{equation}
    R_k = B \log_2 (1 + \gamma_k^{ZF}), 
\label{eq: datarate2}
\end{equation}
where, the allocated bandwidth $B$ is the product of the number of allocated sub-carriers and the bandwidth per sub-carrier. 
The large number of antennas at the BS results in the phenomenon of channel hardening,  mitigating the effects of small-scale fading and reducing the channel variations \cite{massivemimobook}. Asymptotically, the variances of the channel gains vanish completely. This allows to use the average channel gain $\beta_k$ for choosing the modulation and coding scheme. The maximum achievable rate is then given by the expression in (\ref{eq: datarate2}). 
With this expression, the sum-rate maximization problem, where $\boldsymbol{p}=[p_1,...,p_K]$, can then be formulated as
\begin{alignat}{2}
\max_{\boldsymbol{p} \geq \boldsymbol{0}}   &\quad&     & B \sum_k R_k. 
\label{eq:optprob}
\end{alignat}

\begin{prop}\label{prop1}
    The problem in \eqref{eq:optprob} is non-convex. 
\end{prop}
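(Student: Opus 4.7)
My plan rests on the observation that the feasible set $\{\boldsymbol{p}\geq\boldsymbol{0}\}$ is a convex cone, so any non-convexity of \eqref{eq:optprob} must stem from the objective $B\sum_k R_k(\boldsymbol{p})$ failing to be concave. First, I would reduce the claim to a one-dimensional slice: fix an index $k_0$ and set $p_{\tilde{k}}=0$ for every $\tilde{k}\neq k_0$. Under this restriction $\gamma_{\tilde{k}}^{\mathrm{ZF}}=0$ for $\tilde{k}\neq k_0$, so the objective collapses to $B R_{k_0}(p_{k_0})$. Since the restriction of a concave function to an affine slice is concave, any failure of concavity of this single-variable function immediately yields non-concavity of the full objective and hence proves Proposition~\ref{prop1}.

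Second, I would characterize the shape of $R_{k_0}(p_{k_0})$ on this slice, where $\Psi=MP_{\mathrm{max}}/p_{k_0}$ and $D=\eta(1-e^{-\Psi}-\lambda)p_{k_0}$. For small $p_{k_0}$ the PA is essentially linear ($\lambda\to 1$, $D$ exponentially small in $\Psi$), so $\gamma_{k_0}^{\mathrm{ZF}}\approx(M-K)p_{k_0}\beta_{k_0}/\sigma_{k_0}^{2}$ grows roughly linearly from zero. As $p_{k_0}\to\infty$, the small-$\Psi$ expansions $\lambda\sim(\pi/4)\Psi$ and $1-e^{-\Psi}-\lambda\sim(1-\pi/4)\Psi$ applied to \eqref{eq_lambda} and \eqref{eq_distortion_SISO} show that both $\lambda p_{k_0}$ and $D$ tend to strictly positive finite constants, so $\gamma_{k_0}^{\mathrm{ZF}}\to\gamma_\infty<\infty$. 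With nonzero noise $\sigma_{k_0}^{2}$, the intermediate regime drives $\gamma_{k_0}^{\mathrm{ZF}}$ well above $\gamma_\infty$ before the distortion term in the denominator catches up, so $R_{k_0}$ attains an interior maximum at some $p_{k_0}^{*}<\infty$ and then relaxes to its finite asymptote from above.

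Third, I would turn this shape into a non-concavity statement. A twice-differentiable function that decreases from an interior maximum toward a finite asymptote from above cannot be concave on the whole ray: if $R_{k_0}'$ were non-increasing with $R_{k_0}'(p_{k_0}^{*})=0$, boundedness would force $R_{k_0}'\to 0$ from below, hence $R_{k_0}'\equiv 0$ on the tail, contradicting the strict decrease to $R_\infty<R_{k_0}(p_{k_0}^{*})$. Equivalently, picking $p_a$ near $p_{k_0}^{*}$ and $p_b$ deep in the saturation regime yields a direct Jensen violation $R_{k_0}\!\left((p_a+p_b)/2\right)<\tfrac{1}{2}\bigl(R_{k_0}(p_a)+R_{k_0}(p_b)\bigr)$. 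The main technical obstacle is controlling the asymptotics of $\lambda(\Psi)$ and the distortion term rigorously, since both involve $\mathrm{erfc}(\sqrt{\Psi})$; I expect to handle this via the expansions sketched above, or to sidestep the analysis entirely by exhibiting a concrete numerical counterexample for a specific parameter choice $(M,K,\beta_{k_0},\sigma_{k_0}^{2},P_{\mathrm{max}},\eta)$.
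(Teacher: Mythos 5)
Your proposal is correct in substance but takes a genuinely different route from the paper. The paper's proof (Appendix \ref{sec: non-convexity}) works with the full two-UE objective, derives the entries of the $2\times 2$ Hessian symbolically, and then evaluates it numerically at one explicit parameter point ($M=64$, $P_{\mathrm{max}}=10$\,mW, specific $\beta_1,\beta_2,\sigma_k^2$), exhibiting one negative and one positive eigenvalue, so the Hessian is indefinite and the objective is neither convex nor concave there. You instead restrict to the ray $p_{\tilde k}=0$ for $\tilde k\neq k_0$ and argue from the qualitative shape of the single-variable rate: it rises from $R(0)=0$, peaks at an interior $p_{k_0}^*$, and decays strictly to a finite positive asymptote, which is incompatible with concavity on $[0,\infty)$ (your ``concave, bounded below, with $R'(p^*)=0$ forces $R'\equiv 0$ on the tail'' argument is sound). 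Your asymptotics are also right: from \eqref{eq_lambda} and \eqref{eq_distortion_SISO}, $\lambda\sim(\pi/4)\Psi$ and $D\to\eta(1-\pi/4)MP_{\mathrm{max}}$ as $\Psi\to 0$, so $\gamma_{k_0}^{\mathrm{ZF}}$ indeed saturates. What each approach buys: the paper's argument is a self-contained counterexample but certifies non-convexity only at the tested parameter instance and requires wading through the second-derivative expressions; yours, once completed, is parameter-free (it shows non-concavity for \emph{every} instance with $\sigma_{k_0}^2>0$) and needs only first-order information.

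The one load-bearing step you leave open is the existence of an interior maximum followed by a \emph{strict} decrease --- without it, a bounded increasing concave function (e.g., $1-e^{-p}$) would be consistent with your described endpoints, so the finite asymptote alone does not suffice. You flag this honestly, and it can be closed without new work: the paper's own Lemma \ref{lem:DAPA_1} and Appendix \ref{sec:proof_lemma_1} show that $\partial\tilde R_k/\partial P$ has the sign of $f_k(P)$ in \eqref{eq_f_k}, which starts positive at $P=0$ and decreases monotonically to $-\infty$, giving exactly the single interior maximum and strict decay your argument needs. Alternatively, your fallback of a numerical Jensen violation at two chosen powers is logically equivalent to what the paper does with eigenvalues. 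So the proposal is acceptable once that step is either proved via the $f_k$ analysis or replaced by the explicit two-point check.
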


\begin{proof}
The proof, based on the analysis of the eigenvalues of the Hessian matrix of the sum-rate function, is given in Appendix \ref{sec: non-convexity}. 
\end{proof}
Due to the non-convexity of \eqref{eq:optprob}, solving for individual power allocation $p_k$ is non-trivial. Therefore, by defining $p_k = \omega_k P$, we deconstruct the problem in two sub-problems, namely, Distortion Aware total Power $(P)$ Allocation  (DAPA) and Fixed Power  Distribution Algorithm $(\omega_k)$ (FPDA). While $P$ is the sum allocated power for all PA that is to support data transmission to all $K$ UEs, the factor $\omega_k$ denotes which part of the power $P$ is allocated to UE $k$, requiring that $\omega_k\geq 0$ and $\sum_k \omega_k =1$ as $\sum_k p_k=P \sum_k \omega_k =P$.

Using the substitution above, the optimization problem in \eqref{eq:optprob} can be formulated as 
\begin{subequations}
    \begin{alignat}{2}
    \max_{P \geq 0, \boldsymbol{\omega} \geq \boldsymbol{0}} &\quad& & \sum_k B \log_2\left(1 + \frac{(M - K) \cdot \lambda \cdot \omega_{k} P \cdot \beta_{k}}{\sigma_k^2 + \beta_{k} D} \right) \label{eq:obj1} \\
    \text{subject to} &\quad& & \sum_k \omega_{k} = 1.\label{eq:sum_const} 
    \end{alignat}
    \label{eq:optprob3}
\end{subequations}
The equivalent problem in \eqref{eq:optprob3} enables us to decouple the joint optimization over $P$ and $\boldsymbol{\omega}$ into two optimization problems: first over a scalar variable P and second over a vector $\boldsymbol{\omega}$. Thus, we formulate an alternating optimization problem, where keeping power per UE fixed, i.e. for a fixed $\boldsymbol{\omega} = [\omega_1,\dots, \omega_k]$, we solve
\begin{equation}
    \max_{P \geq 0}  \sum_k B\log_2\left(1 + \frac{(M - K) \cdot \lambda \cdot \omega_{k} P \cdot \beta_{k}}{\sigma_k^2 + \beta_{k} D} \right).
    \label{eq:TPA} 
\end{equation}
We call this a DAPA problem. Subsequently,
for a fixed $P$ we solve 
\begin{subequations}
    \begin{alignat}{2}
    \!\!\!\!\max_{\boldsymbol{\omega} \geq \boldsymbol{0}} &\ \ & & \sum_k B \log_2\left(1 + \frac{(M - K) \cdot \lambda \cdot \omega_{k} P \cdot \beta_{k}}{\sigma_k^2 + \beta_{k} D} \right) \label{eq:obj3} \\
    \!\!\!\!\text{subject to} &\ \ & & \sum_k \omega_{k} = 1,\label{eq:pc3} 
    \end{alignat}
    \label{eq:optprob5}
\end{subequations}
 and call it FPDA problem.

\begin{prop} \label{prop:equivalence}
The solution to \eqref{eq:TPA} and \eqref{eq:optprob5} achieve a stationary point of \eqref{eq:optprob3},  which corresponds to finding the stationary point of \eqref{eq:optprob}.
\end{prop}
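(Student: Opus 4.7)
The plan is to prove the proposition in two stages. First, show that any joint fixed point $(P^\star,\boldsymbol{\omega}^\star)$ of the alternating scheme is a stationary point of the reformulation \eqref{eq:optprob3}. Second, show that the smooth change of variables $p_k=\omega_k P$ with $\sum_k\omega_k=1$ establishes a one-to-one correspondence between the stationary points of \eqref{eq:optprob3} and those of the original problem \eqref{eq:optprob}, so that the composite map inherits stationarity.

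For the first stage, I would write down the first-order conditions of each subproblem at the fixed point. Because $P^\star$ is a critical point of the single-variable DAPA problem \eqref{eq:TPA} at $\boldsymbol{\omega}=\boldsymbol{\omega}^\star$, one has $\partial f/\partial P=0$ there. Because the FPDA \eqref{eq:optprob5} is convex in $\boldsymbol{\omega}$ for fixed $P$ (as asserted earlier in the paper), its KKT conditions are both necessary and sufficient; introducing a multiplier $\mu$ for the affine constraint $\sum_k\omega_k=1$, stationarity yields $\partial f/\partial \omega_k=\mu$ for every $k$ with $\omega_k^\star>0$. Concatenating the $P$- and $\omega$-conditions recovers precisely the KKT system of the joint problem \eqref{eq:optprob3}.

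For the second stage, I would invoke the chain rule through the reparametrization. One obtains $\partial f/\partial P=\sum_k\omega_k\,\partial f/\partial p_k$ and $\partial f/\partial \omega_k=P\,\partial f/\partial p_k$. Plugging the stationarity conditions of \eqref{eq:optprob3} into these identities, the $\omega$-conditions give $\partial f/\partial p_k=\mu/P$ for every active $k$, and substitution into the $P$-condition together with $\sum_k\omega_k=1$ forces $\mu=0$. Consequently $\partial f/\partial p_k=0$ for all $k$, which is precisely the first-order stationarity condition of \eqref{eq:optprob}. The converse implication follows by reversing the same chain-rule identities, setting $P=\sum_k p_k$ and $\omega_k=p_k/P$.

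The main obstacle I anticipate is conceptual rather than computational: because \eqref{eq:TPA} is single-variable non-convex, its first-order condition may admit several roots, so the alternating procedure is only guaranteed to land on a stationary point of \eqref{eq:optprob}, not a global maximizer — but this is exactly what the proposition claims. A secondary technical subtlety is the boundary case $P=0$, where the map $(\boldsymbol{\omega},P)\mapsto\boldsymbol{p}$ fails to be a diffeomorphism; this is handled by noting that $P=0$ gives zero sum-rate and is strictly dominated whenever the noise powers $\sigma_k^2$ are finite, so the argument can be restricted to $P>0$ on which the reparametrization is a diffeomorphism onto $\mathbb{R}^K_{>0}$.
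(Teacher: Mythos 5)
Your proposal is correct, and its first stage is essentially the paper's own argument: write the KKT systems of the DAPA subproblem \eqref{eq:TPA} and the FPDA subproblem \eqref{eq:optprob5} at a fixed point of the alternation, and use the chain-rule identities $\partial f/\partial\omega_k = P\,\partial f/\partial p_k$ and $\partial f/\partial P = \sum_k\omega_k\,\partial f/\partial p_k$ to match them, condition by condition, to the KKT system of the joint reformulation \eqref{eq:optprob3}. Where you go beyond the paper is your second stage: the paper asserts that a stationary point of \eqref{eq:optprob3} ``corresponds to'' one of \eqref{eq:optprob} but never verifies it, whereas you actually close that step by observing that the $\omega$-stationarity forces $\partial f/\partial p_k=\nu/P$ on the active set, so the $P$-stationarity together with $\sum_k\omega_k=1$ collapses the constraint multiplier to zero and yields $\partial f/\partial p_k=0$, i.e.\ first-order stationarity of the original problem in the $\boldsymbol{p}$ variables. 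Your handling of the degenerate boundary $P=0$, where the reparametrization fails to be a diffeomorphism, is likewise absent from the paper and is a worthwhile (if minor) addition; one small point to tidy up is the inactive components $\omega_k=0$, for which you should note that the sign of the corresponding multiplier $\mu_k$ in the FPDA KKT system delivers the one-sided condition $\partial f/\partial p_k\le 0$ required for stationarity of \eqref{eq:optprob} at $p_k=0$.
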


\begin{proof}
    The proof is given in Appendix \ref{sec:equivalence}.
\end{proof}

\section {Solution of the optimization problem}
\label{sec:prob_sol}

In this section, we describe the solution to the proposed DAPA and FPDA sub-problems. It should be noted that the solutions to \eqref{eq:TPA} and \eqref{eq:optprob5} are used in an alternating manner
as will be shown in Sec \ref{sec_alternating}. 

\subsection{Distortion Aware total Power Allocation (DAPA)}
\label{sec_total_alg}
The solution to the non-convex DAPA problem equates to finding the stationary point of \eqref{eq:TPA}. This can be achieved by solving
\begin{equation}    \at{\sum_k\frac{\partial  \tilde{R}_k}{\partial P} }{P=\tilde{P}}=0
    \label{eq_sum_d_Rk}
\end{equation}
for the root $\tilde{P}$, where
\begin{equation}
    \tilde{R}_k=B\log_2\left(1 + \frac{(M - K) \cdot \lambda \cdot \omega_{k} P \cdot \beta_{k}}{\sigma_k^2 + \beta_{k} D} \right).
\end{equation}

In order to find $\tilde{P}$ first, properties of $\frac{\partial  \tilde{R}_k}{\partial P}$ can be analyzed using the following lemma: 
\begin{theorem}\label{lem:DAPA_1}
The function $\frac{\partial \tilde{R}_k}{\partial P}$ has a single root equal to the root of the function
\begin{equation}
f_{k}(P)=\frac{2\sigma_k^2}{\sqrt{\pi}\beta_k \eta M P_{max}}-\frac{\textrm{erfc}(\sqrt{\Psi})}{\sqrt{\Psi}}.
\label{eq_f_k}
\end{equation}
The function is monotonically decreasing in the entire range of $P$, i.e., $P \in [ 0, \infty )$.
\end{theorem}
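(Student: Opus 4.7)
\medskip
\noindent\textbf{Proof plan.} My starting point is that $\tilde R_k$ is monotonically increasing in $\gamma_k^{\mathrm{ZF}}$, so a root of $\partial\tilde R_k/\partial P$ coincides with a root of $\partial\gamma_k^{\mathrm{ZF}}/\partial P$, and it is enough to analyse the simpler rational function. Writing $\mu(\Psi)=1-e^{-\Psi}-\lambda(\Psi)$ so that $D=\eta\mu P$, the SINDR becomes $\gamma_k^{\mathrm{ZF}}=(M-K)\beta_k\omega_k\,\lambda P/(\sigma_k^2+\beta_k\eta\mu P)$. Clearing the quotient rule, the vanishing of the numerator of the derivative reduces to
\begin{equation*}
\sigma_k^2\bigl(\lambda+P\lambda_P\bigr)=P^2\beta_k\eta\bigl(\lambda\mu_P-\lambda_P\mu\bigr).
\end{equation*}
The ``decoupling'' step will be to change variable from $P$ to $\Psi=MP_{\max}/P$. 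Because $d\Psi/dP=-\Psi/P$, all factors $P\lambda_P$, $P\mu_P$ become $-\Psi\lambda_\Psi$, $-\Psi\mu_\Psi$, while $P^2\cdot(d\Psi/dP)=-MP_{\max}$, so the implicit dependence on $P$ is absorbed into a clean identity in $\Psi$.

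Writing $\lambda=u^2$ with $u(\Psi)=1-e^{-\Psi}+\tfrac12\sqrt{\pi\Psi}\,\mathrm{erfc}(\sqrt\Psi)$, I would next differentiate $u$ in closed form using $\frac{d}{d\Psi}\mathrm{erfc}(\sqrt\Psi)=-e^{-\Psi}/\sqrt{\pi\Psi}$, obtaining
\begin{equation*}
u'(\Psi)=\tfrac12 e^{-\Psi}+\tfrac{\sqrt\pi\,\mathrm{erfc}(\sqrt\Psi)}{4\sqrt\Psi},\qquad \lambda_\Psi=2uu'.
\end{equation*}
Using $\mu_\Psi=e^{-\Psi}-\lambda_\Psi$ and the identity $w:=u-1+e^{-\Psi}=\tfrac12\sqrt{\pi\Psi}\,\mathrm{erfc}(\sqrt\Psi)$, the quantity $\lambda-\Psi\lambda_\Psi$ telescopes to $u\bigl[1-(1+\Psi)e^{-\Psi}\bigr]$ and, after expansion, $\lambda_\Psi(1-e^{-\Psi})-\lambda e^{-\Psi}$ telescopes to $\tfrac{uw}{\Psi}\bigl[1-(1+\Psi)e^{-\Psi}\bigr]$. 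This is the main obstacle I expect, because the result hinges on both expressions sharing the same factor $u[1-(1+\Psi)e^{-\Psi}]$; the algebra has to be done carefully, but once the shared factor is identified the stationarity equation collapses to $\sigma_k^2=MP_{\max}\beta_k\eta\,w/\Psi$, which is exactly $f_k(P)=0$ after substituting $w/\Psi=\sqrt\pi\,\mathrm{erfc}(\sqrt\Psi)/(2\sqrt\Psi)$. The cancellation is legitimate because for any $\Psi>0$ we have $u>0$ and $1-(1+\Psi)e^{-\Psi}>0$ (the latter follows from $g(0)=0$ and $g'(\Psi)=\Psi e^{-\Psi}>0$).

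For the monotonicity and uniqueness claims I would argue as follows. Since $\Psi=MP_{\max}/P$ is strictly decreasing in $P$, the composition $P\mapsto\sqrt\Psi$ is strictly decreasing as well. A direct derivative shows $\tfrac{d}{dx}\bigl(\mathrm{erfc}(x)/x\bigr)=-\bigl(2xe^{-x^2}/\sqrt\pi+\mathrm{erfc}(x)\bigr)/x^2<0$ on $(0,\infty)$, so $\mathrm{erfc}(\sqrt\Psi)/\sqrt\Psi$ is strictly increasing in $P$. Consequently $f_k(P)$, which subtracts this term from a constant, is strictly monotonically decreasing on $(0,\infty)$. Combined with $f_k(0^+)=2\sigma_k^2/(\sqrt\pi\beta_k\eta M P_{\max})>0$ and $f_k(P)\to-\infty$ as $P\to\infty$ (because $\mathrm{erfc}(\sqrt\Psi)/\sqrt\Psi\to\infty$ as $\Psi\to 0$), $f_k$ has a unique root on $(0,\infty)$, which is the unique stationary point of $\tilde R_k$.
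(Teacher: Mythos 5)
Your proposal is correct and follows essentially the same route as the paper: both reduce $\partial\tilde R_k/\partial P$ to a product of a strictly positive factor $\sqrt{\lambda}\,[1-(1+\Psi)e^{-\Psi}]$ (your $u[1-(1+\Psi)e^{-\Psi}]$) times the term $\sigma_k^2-\tfrac{\sqrt{\pi}}{2}\beta_k\eta M P_{\max}\,\mathrm{erfc}(\sqrt{\Psi})/\sqrt{\Psi}$, and then establish monotonic decrease and the limiting values of $f_k$ to get uniqueness of the root. The only cosmetic differences are that you work with the numerator of $\partial\gamma_k^{\mathrm{ZF}}/\partial P$ and prove monotonicity via $\mathrm{erfc}(x)/x$ being decreasing, whereas the paper differentiates the rate directly and computes $f_k'(P)$ explicitly.
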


\begin{proof}
The proof of the Lemma \ref{lem:DAPA_1} is provided in Appendix \ref{sec:proof_lemma_1}, where the derivative $\frac{\partial  \tilde{R}_k}{\partial P}$ is computed and its characteristic is analyzed. 
\end{proof}

Through Lemma \ref{lem:DAPA_1} we show that $\frac{\partial  \tilde{R}_k}{\partial P}$ has a single root, but a way of finding it is another issue. While unaware of an analytical solution, the following upper and lower bounds can be defined allowing for efficient numerical solutions. 
 \begin{theorem}\label{lemma_upper_lower_root}
     The root $\tilde{P}_k$ of function $f_k(P)$ belongs to the range $[\tilde{P}^{lower}_k, \tilde{P}^{upper}_k ]$, where     
     \begin{equation}
   \tilde{P}^{lower}_{k}=\frac{2MP_{max}}{W\left(2\left(\frac{\sqrt{\pi}\beta_k \eta M P_{max}}{2\sigma_k^2}\right)^2 \right)},
   \label{eq_root_P_lower}
\end{equation}
\begin{equation}
       \tilde{P}^{upper}_{k}=\frac{4MP_{max}}{W\left(4\left(\frac{\sqrt{e}\beta_k \eta M P_{max}}{2\sqrt{2}\sigma_k^2}\right)^2 \right)}
       \label{eq_root_P_upper}
\end{equation}
  and $W(~)$ returns the principal branch of the Lambert W function. 
 \end{theorem}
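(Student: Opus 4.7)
The plan is to rewrite the root condition $f_k(\tilde{P}_k)=0$ in the natural variable $\Psi=MP_{\max}/P$ and then squeeze it between two elementary equations that are explicitly invertible via the Lambert $W$ function. After this change of variable, \eqref{eq_f_k} reduces to $\textrm{erfc}(\sqrt{\Psi})/\sqrt{\Psi}=C$ with $C:=2\sigma_k^2/(\sqrt{\pi}\beta_k\eta MP_{\max})$; the left-hand side inherits from Lemma~\ref{lem:DAPA_1} (or a direct one-line quotient-rule check) the fact that it is strictly decreasing on $(0,\infty)$. Consequently, each upper bound on the root $\Psi_0$ converts one-to-one into a lower bound on $\tilde{P}_k=MP_{\max}/\Psi_0$, and vice versa.

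For $\tilde{P}^{lower}_k$ I would invoke the standard Gaussian upper bound $\textrm{erfc}(x)\le e^{-x^2}$ for $x\ge 0$, proved in one line by checking that $e^{-x^2}-\textrm{erfc}(x)$ vanishes at the endpoints and has a single interior maximum. This yields $\textrm{erfc}(\sqrt{\Psi})/\sqrt{\Psi}\le e^{-\Psi}/\sqrt{\Psi}$, so the root $\bar\Psi$ of $e^{-\bar\Psi}/\sqrt{\bar\Psi}=C$ satisfies $\bar\Psi\ge \Psi_0$ by monotonicity. Squaring gives $\bar\Psi\, e^{2\bar\Psi}=1/C^2$, which inverts to $\bar\Psi=W(2/C^2)/2$; substituting $C$ and rewriting $2/C^2$ as $2\bigl(\sqrt{\pi}\beta_k\eta MP_{\max}/(2\sigma_k^2)\bigr)^2$ reproduces \eqref{eq_root_P_lower}.

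For $\tilde{P}^{upper}_k$ a companion Gaussian lower bound is required, and this is the hard step. I would prove
\[
\textrm{erfc}(x)\ge \sqrt{\frac{e}{2\pi}}\, e^{-2x^2},\qquad x\ge 0,
\]
by analysing $g(x)=\textrm{erfc}(x)-\sqrt{e/(2\pi)}\,e^{-2x^2}$: the auxiliary inequality $e^{x^2}\ge x\sqrt{2e}$, whose difference has a double zero at $x=1/\sqrt{2}$, forces $g'\le 0$, so $g$ decreases from $g(0)=1-\sqrt{e/(2\pi)}>0$ towards $g(\infty)=0^+$ and remains nonnegative. Inserted into the root equation, the same squaring-and-rescaling procedure now produces $4\underline\Psi\, e^{4\underline\Psi}=2e/(\pi C^2)$, hence $\underline\Psi=W(2e/(\pi C^2))/4$, and rewriting the argument as $4\bigl(\sqrt{e}\beta_k\eta MP_{\max}/(2\sqrt{2}\sigma_k^2)\bigr)^2$ together with $\tilde{P}_k\le MP_{\max}/\underline\Psi$ delivers \eqref{eq_root_P_upper}. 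The main obstacle is isolating this specific lower bound: unlike the upper one it is not off-the-shelf, and both the exponent $2x^2$ and the constant $\sqrt{e/(2\pi)}$ are forced by the requirement that the transcendental equation collapse into exact Lambert $W$ form matching the stated argument, with the tangent touch at $x=1/\sqrt{2}$ pinning the prefactor.
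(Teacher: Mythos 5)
Your proposal is correct and follows essentially the same route as the paper's Appendix~D: bound $\textrm{erfc}$ above by $e^{-x^2}$ and below by $\sqrt{e/(2\pi)}\,e^{-2x^2}$, use the monotonicity established in Lemma~\ref{lem:DAPA_1} to order the roots, and invert the resulting transcendental equations via the Lambert $W$ function, and your algebra reproduces \eqref{eq_root_P_lower} and \eqref{eq_root_P_upper} exactly. The only difference is cosmetic: the paper cites the two $\textrm{erfc}$ bounds from the literature, whereas you supply short self-contained proofs of them (both of which check out).
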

 \begin{proof}
 The proof of Lemma \ref{lemma_upper_lower_root} is sketched in detail in Appendix \ref{sec:DAPA_proof_2}, where the upper and lower bounds of the $\textrm{erfc}(x)$ function are used to find the upper and lower bounds on $\tilde{P}_k$, enabling us to apply numerical methods, such as a bisection method, to find the solution.      
\end{proof}
Finally we need to find the root $\tilde{P}$ of $\sum_k\frac{\partial  \tilde{R}_k}{\partial P}$ as defined in \eqref{eq_sum_d_Rk}. Therefore, we first analyze the roots of all (over $k$) functions $f_{k}(P)$. Observe that the root $\tilde{P}_{k_{min}}: k_{min}=\argmin_k \frac{\sigma_k^2}{\beta_k}$ of the function $f_{k_{min}}(P)$ will be minimal over all roots of functions $f_{k}(P)$. This can be justified by the structure of function \eqref{eq_f_k} decreasing from the initial value $\frac{2\sigma_k^2}{\sqrt{\pi}\beta_k \eta M P_{max}}$ for $P=0$ (see Lemma \ref{lem:DAPA_1} and its proof). While the decrease rate is invariant of $k$ (see Appendix \ref{sec:proof_lemma_1}, \eqref{eq_f_k_prim}), the lower the $\frac{\sigma_k^2}{\beta_k}$, the lower the root of $\frac{\partial R_k}{\partial P}$. Similarly, the maximum root $\tilde{P}_{k_{max}}$ will be obtained for $k_{max}=\argmax_k \frac{\sigma_k^2}{\beta_k}$. 

While each component function $\frac{\partial  \tilde{R}_k}{\partial P}$ is non-negative for $P<\tilde{P}_k$ and non-positive for $P>\tilde{P}_k$ as discussed below (\ref{eq_lim_f_k_prim}), their sum will be non-negative for $P<\tilde{P}_{k_{min}}$ and non-positive for $P>\tilde{P}_{k_{max}}$. The root $\tilde{P}$ can be found in the range  $[\tilde{P}_{k_{min}}, \tilde{P}_{k_{max}}]$, e.g., numerically. 

However, the problem is in finding $\tilde{P}_{k_{min}}$ and $\tilde{P}_{k_{max}}$. For this purpose Lemma \ref{lemma_upper_lower_root} can be used. Observe that $\tilde{P}^{lower}_{k_{min}}\leq\tilde{P}_{k_{min}}$, and $\tilde{P}^{upper}_{k_{max}}\geq\tilde{P}_{k_{max}}$. Therefore, the root $\tilde{P}$ belongs to the range $[ \tilde{P}^{lower}_{k_{min}}, \tilde{P}^{upper}_{k_{max}}]$. While $\at{\sum_k\frac{\partial  \tilde{R}_k}{\partial P} }{P=\tilde{P}^{lower}_{k_{min}}}$ is non-negative and $\at{\sum_k\frac{\partial  \tilde{R}_k}{\partial P} }{P=\tilde{P}^{lower}_{k_{max}}}$ in non-positive, numerical methods of finding a root in this range can be used.  
Lemma \ref{lem:DAPA_1} and \ref{lemma_upper_lower_root} allows us to propose Algorithm \ref{alg:subp1} which first finds  $\tilde{P}^{lower}_{k_{min}}$ and $\tilde{P}^{upper}_{k_{max}}$ and then employs bisection method to find the root of $\sum_k\frac{\partial  \tilde{R}_k}{\partial P}$ in the above-mentioned range. 

\begin{algorithm}
\caption{Distortion Aware total Power Allocation (DAPA)}\label{alg:subp1}
\begin{algorithmic}[1]
\State $k_{min} \gets \argmin_k \frac{\sigma_k^2}{\beta_k}$ ; $k_{max} \gets \argmax_k \frac{\sigma_k^2}{\beta_k}$
\State $P_L \gets \tilde{P}^{lower}_{k_{min}}$ using (\ref{eq_root_P_lower}) ; $P_R \gets \tilde{P}^{upper}_{k_{max}}$ using (\ref{eq_root_P_upper})
\While{$P_R-P_L>\delta$ }
\State $P_C \gets 0.5P_L+0.5P_R$
\If{$\at{\frac{\partial \sum \tilde{R}_{k}}{\partial P}}{P=P_C}>0$ using (\ref{eq_derivative_Rk})}
    \State $P_L \gets P_C$
\Else
    \State $P_R \gets P_C$
\EndIf
\EndWhile
\end{algorithmic}
\end{algorithm}

It should be noted that the accuracy of the output result $P_{C}$ is limited by a non-negative value $\delta$ used in Line 3 of Algorithm ~\ref{alg:subp1}, i.e., the root belongs to the range $[P_{C}-\frac{\delta}{2}, P_{C}+\frac{\delta}{2}] $. 

\subsection{Fixed Power Distribution Algorithm (FPDA)}
\label{sec_distr_alg}
The FPDA aims at optimally distributing the allocated total power $P$ among the UEs, which is obtained by solving \eqref{eq:TPA}. For a fixed $P$, we apply the Karush-Kuhn-Tucker (KKT) conditions to achieve optimal total power distribution and propose the following:

\begin{theorem} \label{lem:DAPAD}
    For a fixed P, the optimal solution of the optimization problem in \eqref{eq:optprob5}, represented by $\omega^*_k$, is
    \begin{align}
        \omega^*_k=\max \left\{ 0 , \frac{B}{\nu^*} - \frac{\sigma^2_k + \beta_kD}{(M-K)\lambda P\beta_k} \right\},
    \label{eq:opt_omega}
    \end{align}
    where $\nu^*$ is the optimal solution of Lagrange dual problem.
\end{theorem}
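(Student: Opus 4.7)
The plan is to treat \eqref{eq:optprob5} as a standard constrained convex program in $\boldsymbol{\omega}$ for fixed $P$, then apply KKT conditions to obtain a water-filling-type closed-form expression.

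First I would verify that, with $P$ fixed, the objective \eqref{eq:obj3} is concave in $\boldsymbol{\omega}$. The key observation is that fixing $P$ fixes $\Psi = MP_{\max}/P$ via \eqref{eq_IBO_def}, hence fixes $\lambda$ via \eqref{eq_lambda}, and also fixes $D$ via \eqref{eq_per_UE_distortion_effective} since $D$ depends on the $p_k$'s only through $\sum_k p_k = P$. Therefore, defining the constants $a_k = \frac{(M-K)\lambda P \beta_k}{\sigma_k^2 + \beta_k D} > 0$, each summand in \eqref{eq:obj3} reduces to $B\log_2(1+a_k\omega_k)$, which is concave in $\omega_k$. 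The feasible set defined by \eqref{eq:pc3} and $\boldsymbol{\omega}\geq\boldsymbol{0}$ is a simplex, so the problem is convex and Slater's condition holds; hence KKT conditions are necessary and sufficient.

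Next, I would form the Lagrangian
\begin{equation}
\mathcal{L}(\boldsymbol{\omega},\nu,\boldsymbol{\mu}) = -\sum_k B\log_2(1+a_k\omega_k) + \nu\Bigl(\sum_k \omega_k - 1\Bigr) - \sum_k \mu_k \omega_k,
\nonumber
\end{equation}
with $\nu\in\mathbb{R}$ for \eqref{eq:pc3} and $\mu_k\geq 0$ for $\omega_k\geq 0$. Stationarity $\partial \mathcal{L}/\partial \omega_k = 0$ yields
\begin{equation}
\frac{B a_k}{\ln 2\,(1+a_k\omega_k)} = \nu - \mu_k,
\nonumber
\end{equation}
and complementary slackness requires $\mu_k\omega_k = 0$. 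For active users ($\omega_k^{*}>0$), we have $\mu_k=0$, and solving for $\omega_k^{*}$ gives $\omega_k^{*} = \frac{B}{\nu\ln 2} - \frac{1}{a_k}$. For inactive users, $\omega_k^{*}=0$. Absorbing the $\ln 2$ factor into the dual variable and substituting $1/a_k = (\sigma_k^2+\beta_k D)/((M-K)\lambda P \beta_k)$, both cases combine into the single expression \eqref{eq:opt_omega}.

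Finally, I would note that $\nu^{*}$ is determined implicitly by enforcing the primal feasibility $\sum_k \omega_k^{*}(\nu^{*}) = 1$, which is precisely the role of the Lagrange dual problem; this equation has a unique solution because the right-hand side is continuous and strictly monotone in $1/\nu^{*}$ on the region where at least one user is active. The only subtlety—and arguably the main point requiring care—is confirming that fixing $P$ decouples $D$ and $\lambda$ from the optimization variable $\boldsymbol{\omega}$, so that the seemingly coupled objective in \eqref{eq:obj3} is in fact a separable concave function of $\boldsymbol{\omega}$; once this is recognized, the rest is a routine water-filling derivation.
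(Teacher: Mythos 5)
Your proposal is correct and follows essentially the same route as the paper's Appendix E: form the Lagrangian of the FPDA sub-problem, apply the KKT conditions, and resolve the complementary-slackness case split into the water-filling expression \eqref{eq:opt_omega}, with $\nu^*$ fixed by the simplex constraint via the piecewise-linear monotone equation. Your explicit observation that fixing $P$ freezes $\Psi$, $\lambda$, and $D$ (so the objective is separable and concave and Slater's condition holds, making KKT sufficient) is left implicit in the paper's proof but is a worthwhile clarification rather than a deviation.
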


\begin{proof}
    The proof is sketched in detail in Appendix \ref{sec:DAPAD_proof}. We formulate the Lagrange dual function and apply the KKT conditions to find the optimal solution. 
\end{proof}
It can be observed that the solution to $\omega_k^*$ in \eqref{eq:opt_omega} resembles the well known water filling solution, which is a piecewise-linear increasing function of $\frac{B}{\nu^*}$, with break points at $\frac{\sigma^2_k + \beta_kD}{(M-K)\lambda P\beta_k}$ for each $k$. Therefore, the optimum $\nu^*$ can be found either by the well known bisection method or a single loop solution provided in Algorithm \ref{algo:wf}.  

\begin{algorithm}
    \caption{Fixed Power Distribution Algorithm (FPDA)
    } \label{algo:wf}
    \begin{algorithmic}[1]
    \State \text{Input}: \text{$P$ from \textbf{Algorithm \ref{alg:subp1}}}
    
    \State \text{Compute}: $\boldsymbol{G} = \{G_k \gets \frac{\sigma^2_k + \beta_kD}{(M-K)\lambda P\beta_k}\: |\: \forall_{k>\tilde{k}} G_k\geq G_{\tilde{k}} \} $ 
    
        \State Initialize: $\boldsymbol{\omega} \gets \boldsymbol{0}$, $\omega_{\text{rem}} \gets 1$      
        \For{$k = 0 \to K - 1$}
            \If{$k < K - 1$}
                \If{$\omega_{\text{rem}} \leq k \cdot |\boldsymbol{G}_{k + 1} - \boldsymbol{G}_k|$}
                     \For{$m = 0 \to k+1$}
                     \State ${\omega}_{m} \gets {\omega}_{m} + \frac{\omega_{\text{rem}}}{k + 1}$
                    \EndFor 
                    \State \textbf{break}
                \Else
                 \For{$m = 0 \to k+1$}
                     \State ${\omega}_{m} \gets {\omega}_{m} + |\boldsymbol{G}_{k + 1} - \boldsymbol{G}_k| $
                    \EndFor
                    \State $\omega_{\text{rem}} \gets \omega_{\text{rem}} - k \cdot 
                    |\boldsymbol{G}_{k + 1} - \boldsymbol{G}_k|$
                \EndIf
            \Else
            \For{$m = 0 \to k+1$}
                     \State ${\omega}_{m} \gets {\omega}_{m} + \frac{\omega_{\text{rem}}}{k + 1} $
                    \EndFor
            \EndIf
        \EndFor
    \end{algorithmic}
\end{algorithm}

\subsection{Alternating Optimization}
  \label{sec_alternating}
The solutions of sub-problems presented in Sec. \ref{sec_total_alg} and Sec. \ref{sec_distr_alg} can be used to solve the original optimization problem (\ref{eq:optprob3}), and thereby \eqref{eq:optprob}, by  alternating optimization as shown in Algorithm \ref{alg:ao}. Firstly, for a fixed $\boldsymbol{\omega} = [\omega_1,\dots, \omega_k]$, we find the optimum solution for \eqref{eq:TPA} via Algorithm \ref{alg:subp1} and then with this solution, i.e. for a fixed $P$, we find the optimum solution for \eqref{eq:optprob5} via Algorithm \ref{algo:wf}. This process is repeated over multiple iterations of $P$ and $\omega$, with superscript $(i)$ denoting the solution at the $i$-th iteration, until convergence is achieved, i.e, until the difference between total allocated power in the current and previous iteration is smaller than a non-negative value $\delta$, which measures the accuracy of the root.
\begin{algorithm}
\caption{Alternating Optimization}\label{alg:ao}
\begin{algorithmic}[1]
\State Initialize: $i \gets 0$,  $\boldsymbol{\omega^{(i)}} = [\omega_1^{(i)}, \dots, \omega_K^{(i)}] \gets \frac{1}{K}$
\Repeat
\State $i \gets i+1$
\State solve \eqref{eq:TPA} given $\boldsymbol{\omega^{(i-1)}}$ to get $P^{(i)}$ by \textbf{ Algorithm \ref{alg:subp1}}
\State solve \eqref{eq:optprob5} given $P^{(i)}$ to get $\boldsymbol{\omega^{(i)}}$ by \textbf{ Algorithm \ref{algo:wf}}
\Until${|P^{(i-1)} - P^{(i)}| < \delta}$
\end{algorithmic}
\end{algorithm}

Note that the algorithm can be easily extended to account for the sum-rate in the convergence criterion.

\section{Numerical Results}
\label{sec:results}

In this section, we evaluate the performance of the proposed DAPA-FPDA algorithm under different scenarios and compare the results with certain benchmark algorithms, a summary of which is provided in Table \ref{tab: algos}. 
\begin{table}[!ht]
    \centering
    \begin{adjustbox}{width=1.0\columnwidth}
        \begin{tabular}{|>{\columncolor[HTML]{9B9B9B}}l|l|}
        \hline
        DAPA-FPDA & Distortion Aware total Power Allocation using Algorithm \ref{alg:subp1}with \\ & Fixed Power Distribution Algorithm using Algorithm \ref{algo:wf}   \\ \hline
        DAPA-E & Distortion Aware total Power Allocation using Algorithm \ref{alg:subp1} with \\  & Equal per UE power distribution \\ \hline
        REF-E & Reference total power (IBO equals 6 dB) with \\ & Equal per UE power distribution  \\ \hline
        REF-FPDA & Fixed IBO based Total Power allocation with \\ & Fixed Power Distribution Algorithm using Algorithm \ref{algo:wf}\\ \hline     
        \end{tabular}
    \end{adjustbox}
\caption{Tested algorithms abbreviation summary.}
\label{tab: algos}
\end{table}
\begin{table}[!ht]
\renewcommand{\arraystretch}{1.2}
    \centering
    \begin{adjustbox}{width=1.0\columnwidth}
        \begin{tabular}{|c|c|c|c|}
        \hline
        \rowcolor[HTML]{9B9B9B}
         $N_{\mathrm{U}}$ & $\Delta f$ & $\eta$ & $(\sigma_k^2)_{dBm}$  \\ \hline
        $1200$ & $15$ kHz & $\frac{2}{3}$ & $-174 \frac{dBm}{Hz} + 10  \cdot \log_{10}(N_{\mathrm{U}} \cdot \Delta f)$ \\ \hline
        \rowcolor[HTML]{9B9B9B}
        $M$ & $K$ & $P_{\textrm{max}}$ & $(\beta_k)_{\mathrm{dB}}$ \cite{itu-m2135-1-2009}  \\ \hline
        $64, 512$ & $2, 20, 60$ & $100$ mW & $22.7 + 36.7 \cdot \log_{10}(d_k) + 26 \cdot \log_{10}(f_c)$  \\ \hline     
        \end{tabular}
    \end{adjustbox}
\caption{Simulation parameters, where $d_k$ is the distance between UE $k$ and the base station in meters and $f_c =  3$ GHz \cite{itu-m2135-1-2009}.}
\label{tab: sys_para}
\end{table}
It should be noted that $\Psi = 6$ dB has been used for the reference algorithms REF-E and REF-FPDA because the SDR at this value for a single transmitter equals around $27$ dB. This is equivalent to the mean EVM of $4.5\%$ required in $5$G New Radio for $256$-QAM constellation \cite{3gpp_38141}. Additionally, Table \ref{tab: sys_para} shows the simulation parameters.

Please keep in mind that in this section path loss in dB is denoted as $(\beta)_{\mathrm{dB}}$ and is related to the linear channel gain $\beta$ used in the previous sections as $\left( \beta \right)_{\mathrm{dB}}=-10\log_{10}(\beta)$.

\subsection{Multi-UE Homogeneous Path Loss}
\label{subsec:multi_use_same_pathloss}
\begin{figure*}[!t]
    \centering
    \captionsetup[subfloat]{labelfont=small,textfont=small}
    \subfloat[sum-rate over varying path loss.  \label{fig:ratio_comp_DAPA_FPDA_REFE}]{
        \includegraphics[width=0.95\columnwidth]{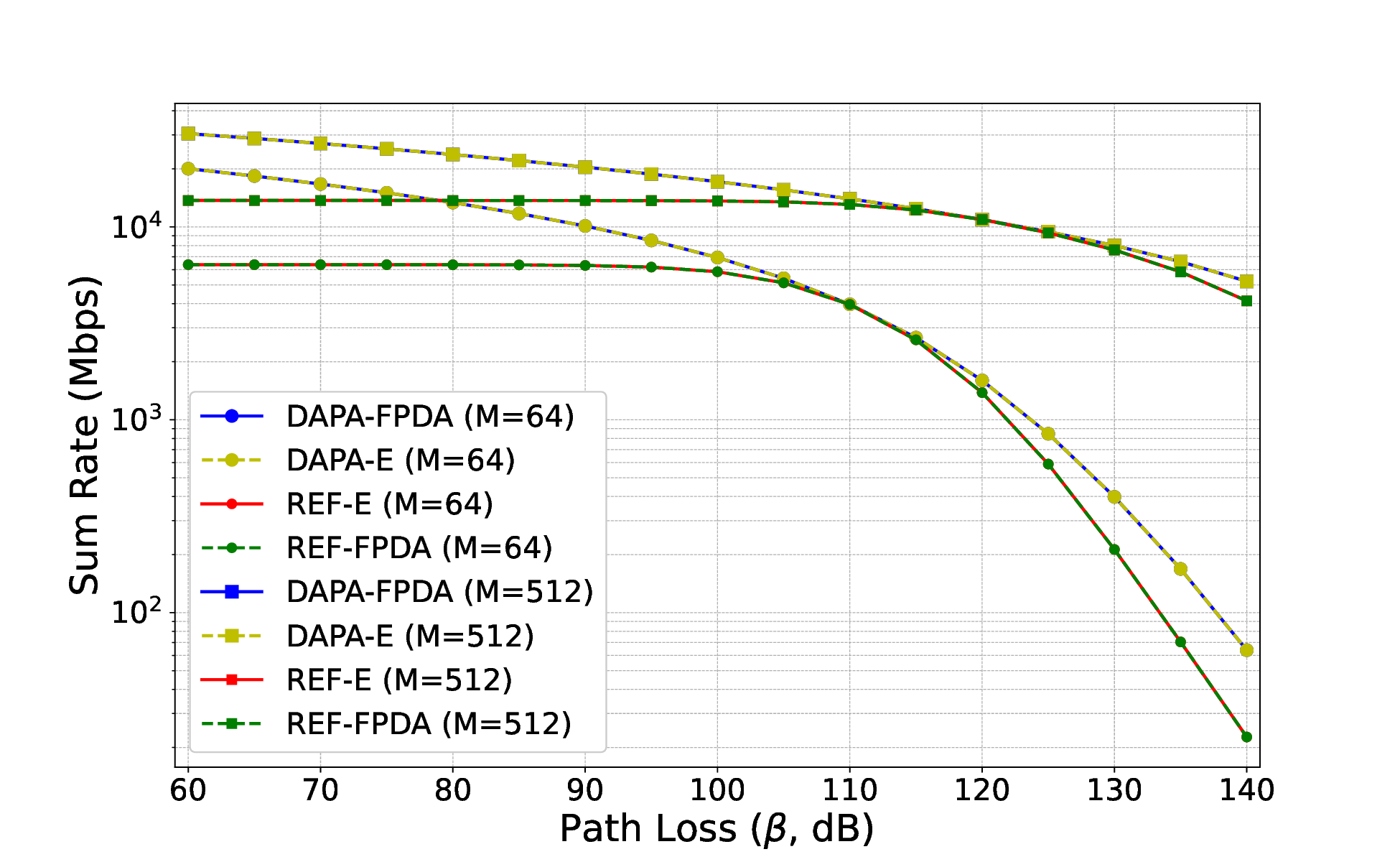} 
    }
    \subfloat[IBO over varying path loss. \label{fig:ratio_comp_DAPA_FPDA_DAPAE}]{
        \includegraphics[width=0.95\columnwidth]{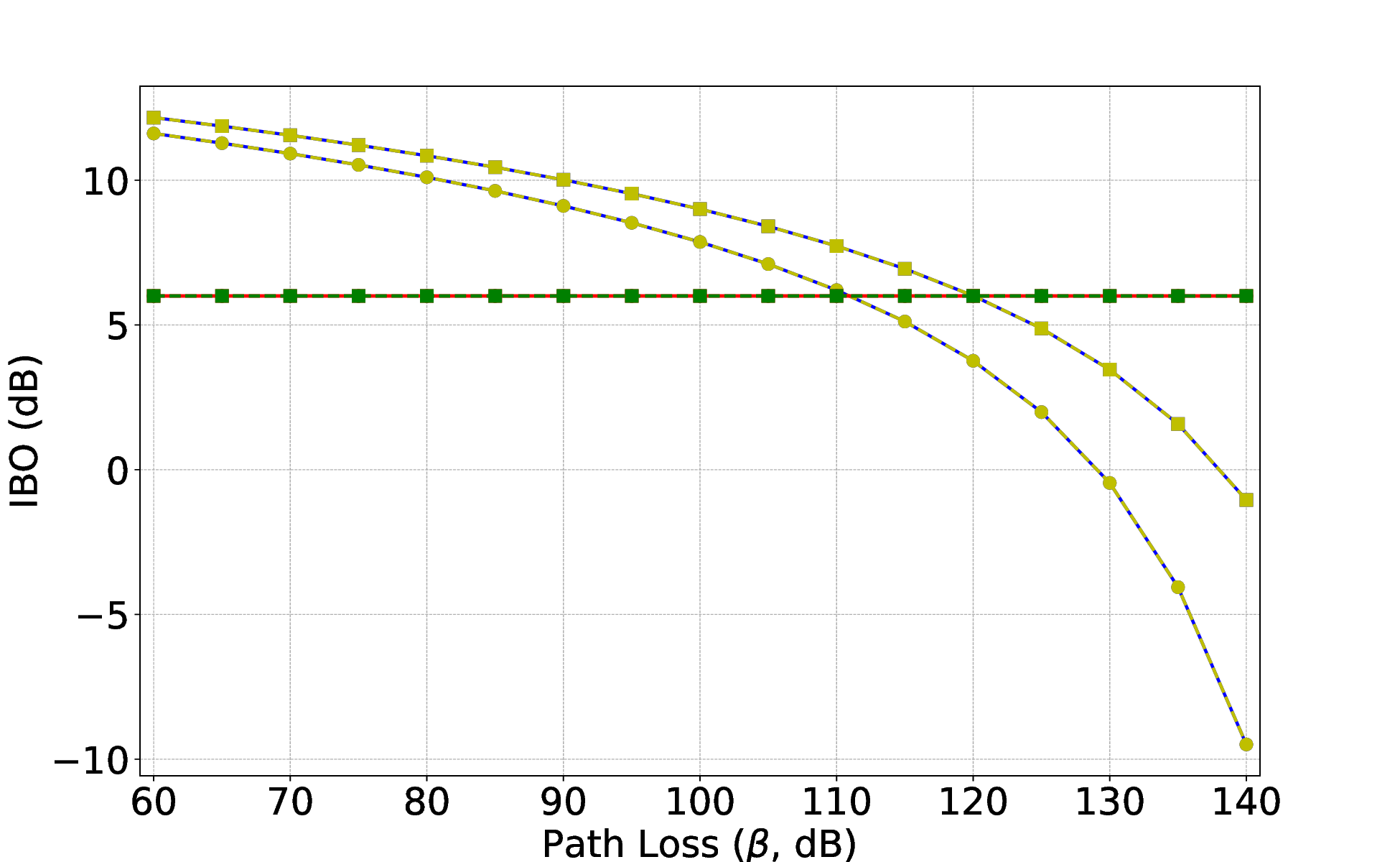} 
    }
    
    \caption{sum-rate and IBO comparison for $K=60$ of equal path loss, i.e., symmetric UEs.}
    \label{fig:rate_ibo}
\end{figure*}

With $K = 20$ UEs, $M = 64$ or $M = 512$ antennas, and $P_{\mathrm{max}} = 100$ mW, Fig. \ref{fig:rate_ibo} shows the changes in the sum-rate and the IBO over different path loss values, where all the UEs have the same path loss, which, e.g., is possible when the UEs are at the same distance from BS but on various azimuth angles.
Firstly, it can be observed that, as expected, the sum-rate decreases as the channel worsens. More interestingly, if IBO adaptation is allowed, i.e. DAPA-FPDA and DAPA-E, its value also decreases. This is a similar effect as in a single-UE, SISO OFDM system \cite{kryszkiewicz2023efficiency}. An optimal IBO decreasing with path loss shows that the further away the receiver is, a higher power of nonlinear distortion is acceptable (also measurable as decreasing SDR value). At the same time the SNR increases as the wanted signal power increases. 
DAPA-FPDA and DAPA-E achieve the same performance as all UEs have the same path loss, resulting in equal power allocation as the optimal strategy. 
Most importantly, DAPA-FPDA and DAPA-E outperform reference solutions (REF-E and REF-FPDA) in terms of sum-rate, both when path loss is greater and smaller than around $100$ dB for $M=64$ antennas. For $100$ dB path loss, an IBO of $6$ dB is optimal, where all algorithms achieve an equal sum-rate. However, observe that as the number of antennas increases from $64$ to $512$, the IBO crossing point shifts to around $110$ dB path loss. This can be justified by looking at (\ref{eq_f_k}). Observe that in the considered case $f_k(P)$ is equal for all UEs. As this function increases monotonically in IBO, a higher value of $M$ leads to a lower value of $\frac{2\sigma_k^2}{\sqrt{\pi}\beta_k \eta M P_{max}}$ and therefore the root will be observed for higher IBO values. Most interestingly, observe that for the homogeneous UE case, the optimal IBO is independent of the number of allocated UEs, i.e., there is no $K$ in (\ref{eq_f_k}). Less formally, the increase in optimal IBO as a function of the number of antennas can be justified by the increased SNR at a given distance with a higher number of antennas, i.e., precoding gain $(M-K)$. Since the SNR is improved the nonlinear distortion power should be reduced, which can be achieved by increasing the IBO. As expected with a higher number of antennas a higher throughput is achievable.

\subsection{Two-UE Heterogeneous Path Loss}
\label{subsec:two_ue_het_loss}

In this section $K = 2$ UEs are considered for $M = 64$ antennas, and $P_{\mathrm{max}} = 100$ mW. All possible UEs path loss combinations on a rectangular grid of values from 60 dB to 150 dB with 1 dB steps have been tested. First, Fig.  \ref{fig:ratio_DAPA_FPDA_REFE}
shows the ratio of sum-rate achievable by the proposed DAPA-FPDA vs REF-E method. On the diagonal, where $\beta_1=\beta_2$, an improved rate is achievable when UEs are close to the BS, i.e., path loss below 90 dB, or far away, i.e., path loss above 120 dB. The achievable rate can be more than 2 times higher than in the reference scenario. This results from the adjustment of the IBO value that is visible on diagonal of Fig. \ref{fig:IBOopt}, showing IBO value for DAPA-FPDA. The highest gain is observed for a large path loss for which setting low IBO decreases SDR but as the system is noise-limited at this distance it allows to increase SNDR. For the path loss around 100 dB, the optimal IBO value equals 6 dB, which is the same as the one used in REF-E scenario. This is in line with the results of Section \ref{subsec:multi_use_same_pathloss}. As expected, in the entire range of path loss values, as long as $\beta_1=\beta_2$, both UEs are allocated equal power, i.e., $\omega_1=\omega_2=0.5$, which is also visible in Fig. \ref{fig:omega_distribution_DAPA_FPDA}. 
\begin{figure}[!t]
 \centering
  \includegraphics[width=0.8\columnwidth]
  {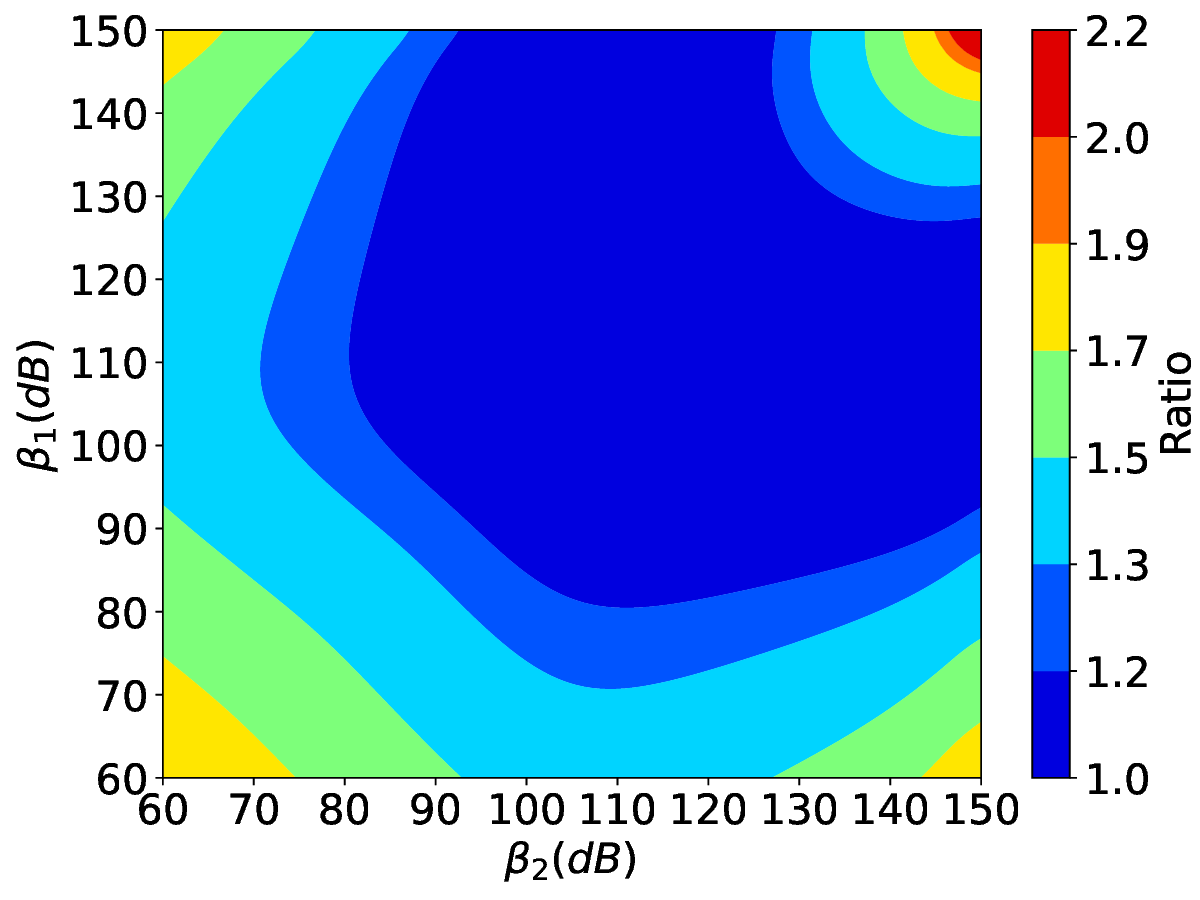}
 \caption{Ratio of DAPA-FPDA and REF-E sum-rate for 2 UE case.}
  \label{fig:ratio_DAPA_FPDA_REFE}
 \end{figure}   

\begin{figure}[!t]
\centering
\includegraphics[width=0.8\columnwidth]{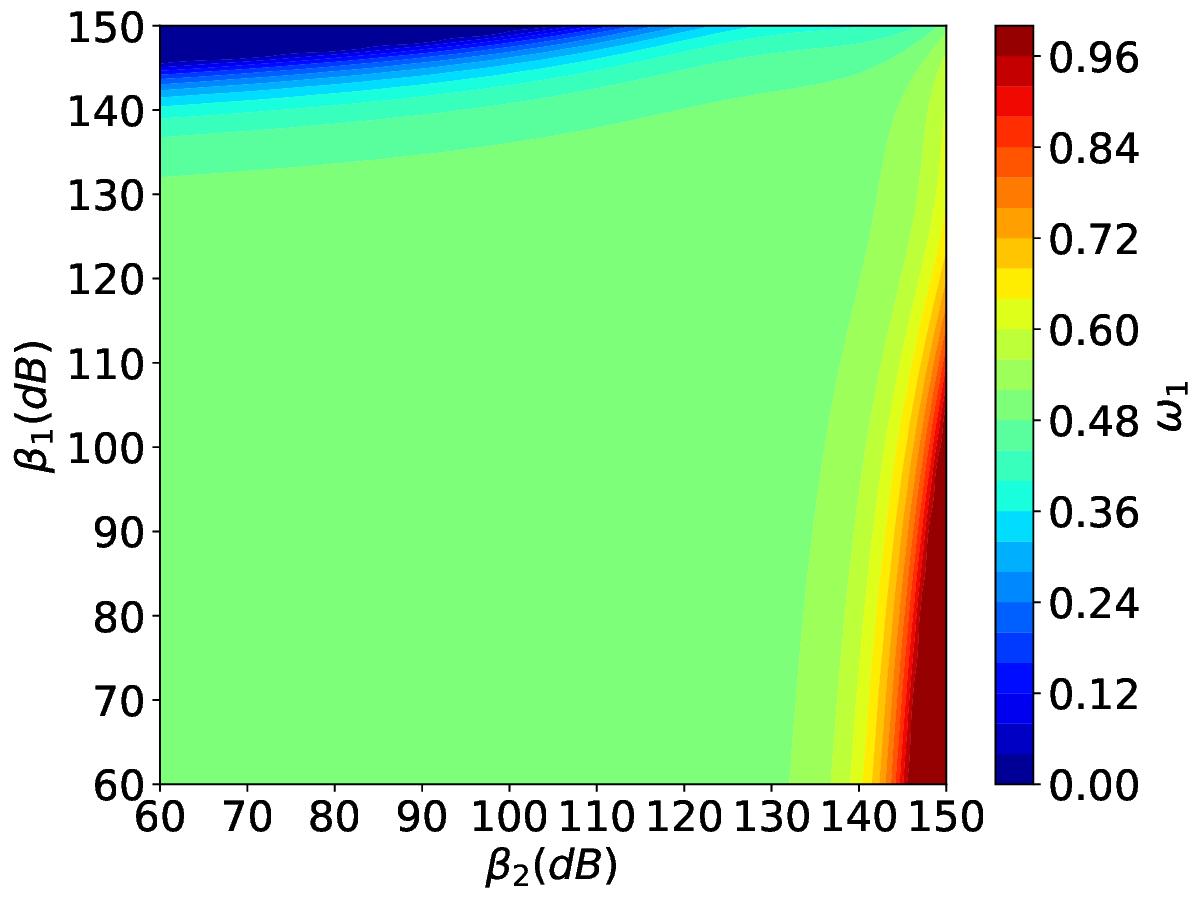}
\caption{Fraction of power allocated by DAPA-FPDA to UE 1 ($\omega_1$) for a 2 UE case.}
\label{fig:omega_distribution_DAPA_FPDA}
\end{figure} 

\begin{figure}[!t]
\centering
\includegraphics[width=0.8\columnwidth]{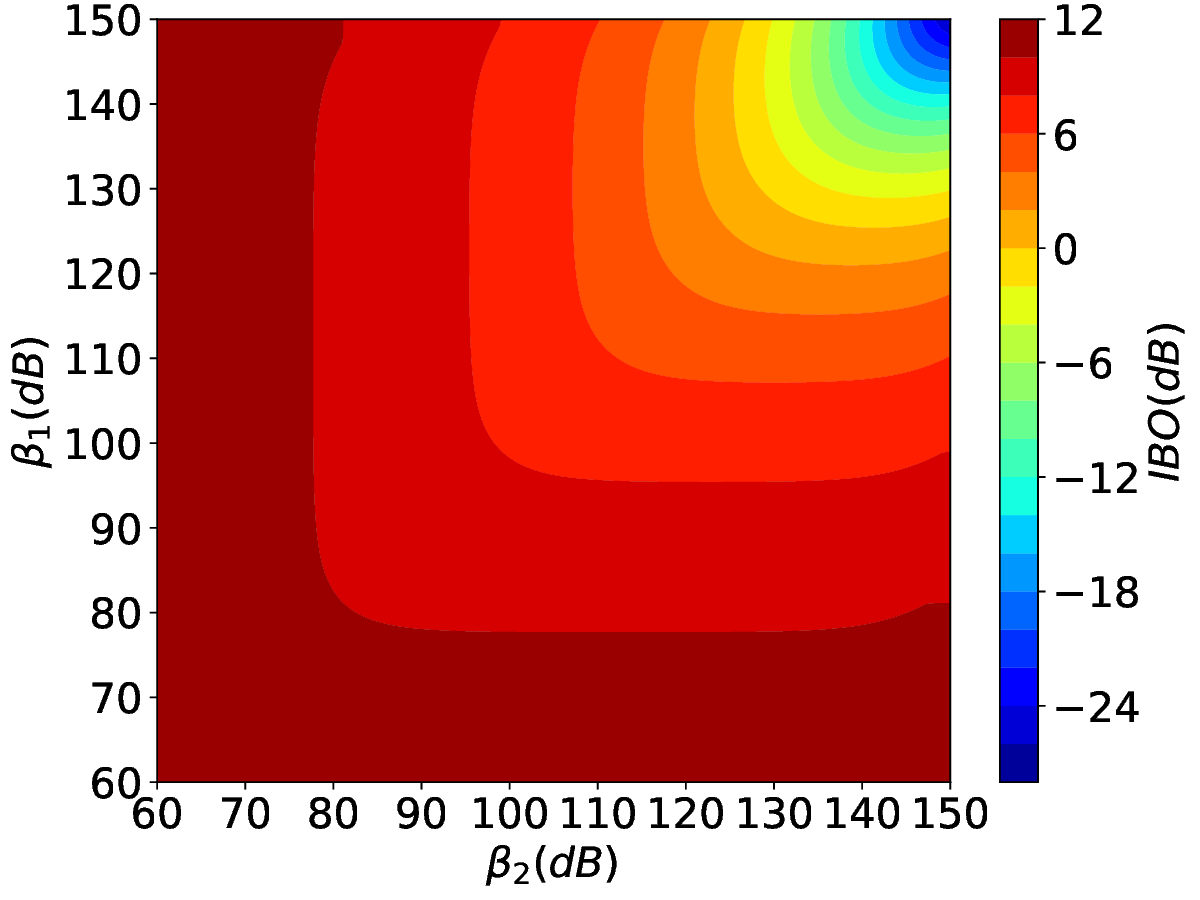}
\caption{Optimal IBO for DAPA-FPDA for a 2 UE case.}
\label{fig:IBOopt}
\end{figure}  

More interesting is the off-diagonal performance of DAPA-FPDA algorithm, i.e., when there is a difference in path loss between UEs. It can be observed from Fig. \ref{fig:ratio_DAPA_FPDA_REFE} that the highest sum-rate gain with respect to the REF-E algorithm is obtained in the extreme case, when one UE is close to the BS (path loss around 60 dB) and the other UE far away (path loss around 150 dB). This allows to increase the sum-rate by approx. 80\%, which is obtained by allocating most of the power to the UE with better channel (lower path loss) (see Fig. \ref{fig:omega_distribution_DAPA_FPDA}).  Most interestingly, in quite wide range of path losses, i.e., if both UEs have path loss below approx. 130 dB, the equal allocation ($\omega_1=\omega_2=0.5$) is optimal. This stems from the typical gains of water filling: achieved under low SNR regime, if there are significant differences between multiple considered channels.

One might ask how the IBO is selected if there are multiple UEs of different wireless channel path loss. In Fig. \ref{fig:IBOopt} nearly L-shaped strips of equal IBO values are visible. Mostly the smaller path loss, ergo: the UE closer to the BS, \emph{decides} on the IBO. This can be attributed to the sum-rate maximization, where it is reasonable to maximize SNDR of the closest UE because it provides the highest rate out of all allocated UEs. Most interestingly, even though the IBO changes rapidly with the best UE path loss, the power distribution among UEs is very stable and equal for all UEs with path loss lower than 130 dB,  which is observable in Fig. \ref{fig:omega_distribution_DAPA_FPDA}.

\subsection{Multi-UE Heterogeneous Path Loss}
\label{subsec:multi_ue_het_loss}

In this scenario, the sum-rate is evaluated for $P_{\mathrm{max}} = 100$~mW, where UEs are uniformly distributed within a $2$ km coverage radius of a circular cell. For each $M = [64, 512]$ antennas, we generate $1000$ random locations for $K=60$ UEs.  

\begin{figure}[!t]     
\centering
\includegraphics[width=0.95\columnwidth]{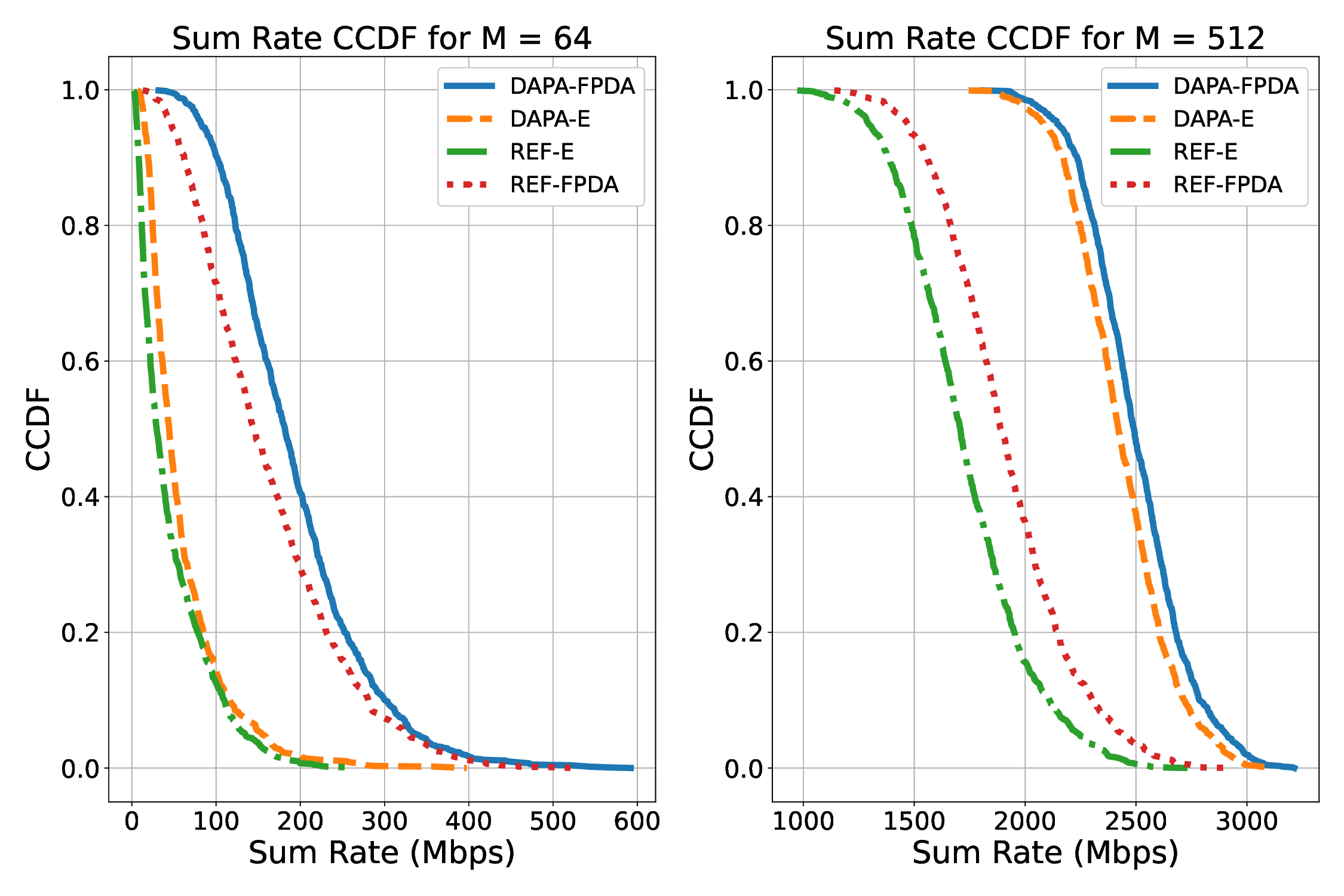} 
    \caption{CCDF of sum-rate for $K = 60$ UEs of random locations in a 2 km radius cell.}
    \label{fig:CCDF_rate}
\end{figure}
Figure \ref{fig:CCDF_rate} shows the Complementary Cumulative Distribution Function (CCDF) of the sum-rate. First, it can be observed that a higher sum-rate is achieved as the number of antennas increases. Secondly, the proposed DAPA-FPDA algorithm outperforms the DAPA-E, REF-E, and REF-D algorithms irrespective of the number of antennas used. The highest gains with respect to the REF-E solution are observed for $M=512$ with a median sum-rate gain of about 50 \%. The gain with respect to the REF-FPDA solution is slightly lower, however, still significant, i.e. approx. 40 \% for the median sum-rate. 
It is interesting to observe that while DAPA-E is nearly optimal for $M=512$ antennas providing only a slight sum-rate loss with respect to DAPA-FPDA algorithms, it is outperformed by the REF-FPDA algorithm for the $M=64$ case. This shows that depending on the use case, both unequal power allocation (through FPDA) and optimal mean power allocation (DAPA) play a dominant role in maximizing the sum-rate. 

\begin{figure}[!t]        
\centering
\includegraphics[width=0.95\columnwidth]{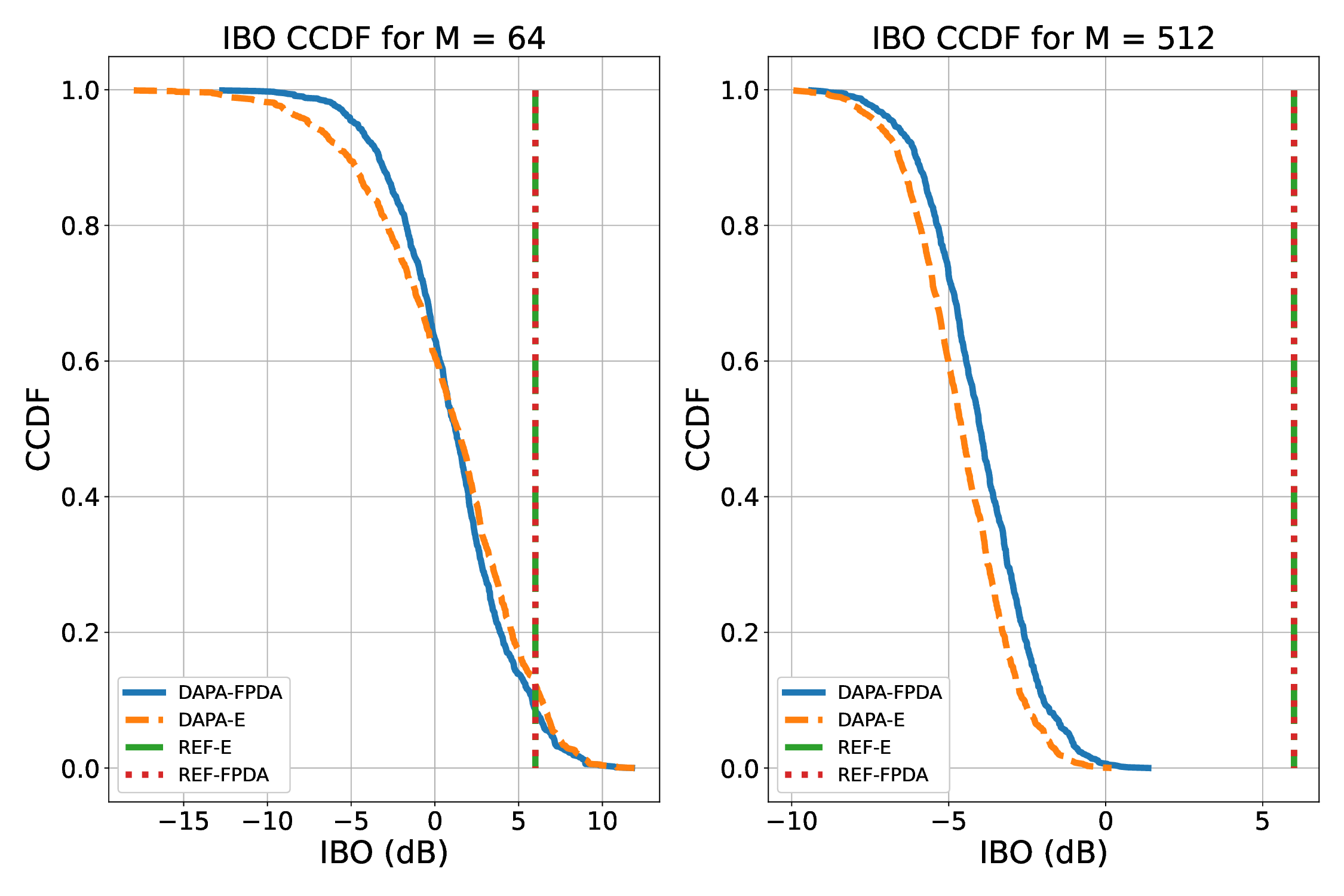} 
    \caption{CCDF of IBO for $K = 60$ UEs of random locations in a 2 km radius cell.}
    \label{fig:CCDF_IBO}
\end{figure}
The mechanism of sum-rate gain is partially revealed by Fig.~\ref{fig:CCDF_IBO} showing the CCDF of the IBO value. It is visible that almost always DAPA-FPDA and DAPA-E use lower IBO, allowing for stronger signal clipping than the reference algorithms. Moreover, it is observable that the optimal IBO distribution in both DAPA-FPDA and DAPA-E is different, i.e., a single iteration of alternating optimization is not enough as the IBO is further modified in the next iterations. In most cases, the M-MIMO configuration with $M=512$ antennas uses lower IBO in DAPA-FPDA than for $M=64$ antennas. This contradicts the IBO selection conclusions for homogeneous UE scenario, as shown in Fig. \ref{fig:rate_ibo}.
This is caused by the path loss $\beta_k$ varying among UEs. The variation also results in varying power allocated to each UE, as visible in Fig. \ref{fig:ccdf_max_omega}. It shows the CCDF of the maximal $\omega_k$ value. It can be observed that each $\omega_k$ equals $1/60\approx 0.017$ when equal power is allocated. However, it is visible especially for $M=64$ antennas that a single UE takes at least 30\% of total allocated power in most cases. For a higher number of antennas ($M=512$), the power distribution among the UEs is rather equal.
This can be justified from Fig. \ref{fig:omega_distribution_DAPA_FPDA}, i.e., there is a need to differentiate power allocation among UEs only when some of them have very poor propagation conditions. By increasing the number of antennas a given UE can expect to achieve a higher SNR value, reducing the need for varying power among allocated UEs.
Moreover, the power allocation is more polarized in REF-FPDA than in the DAPA-FPDA case.
\begin{figure}[!t]
    \centering     
    \includegraphics[width=0.95\columnwidth]{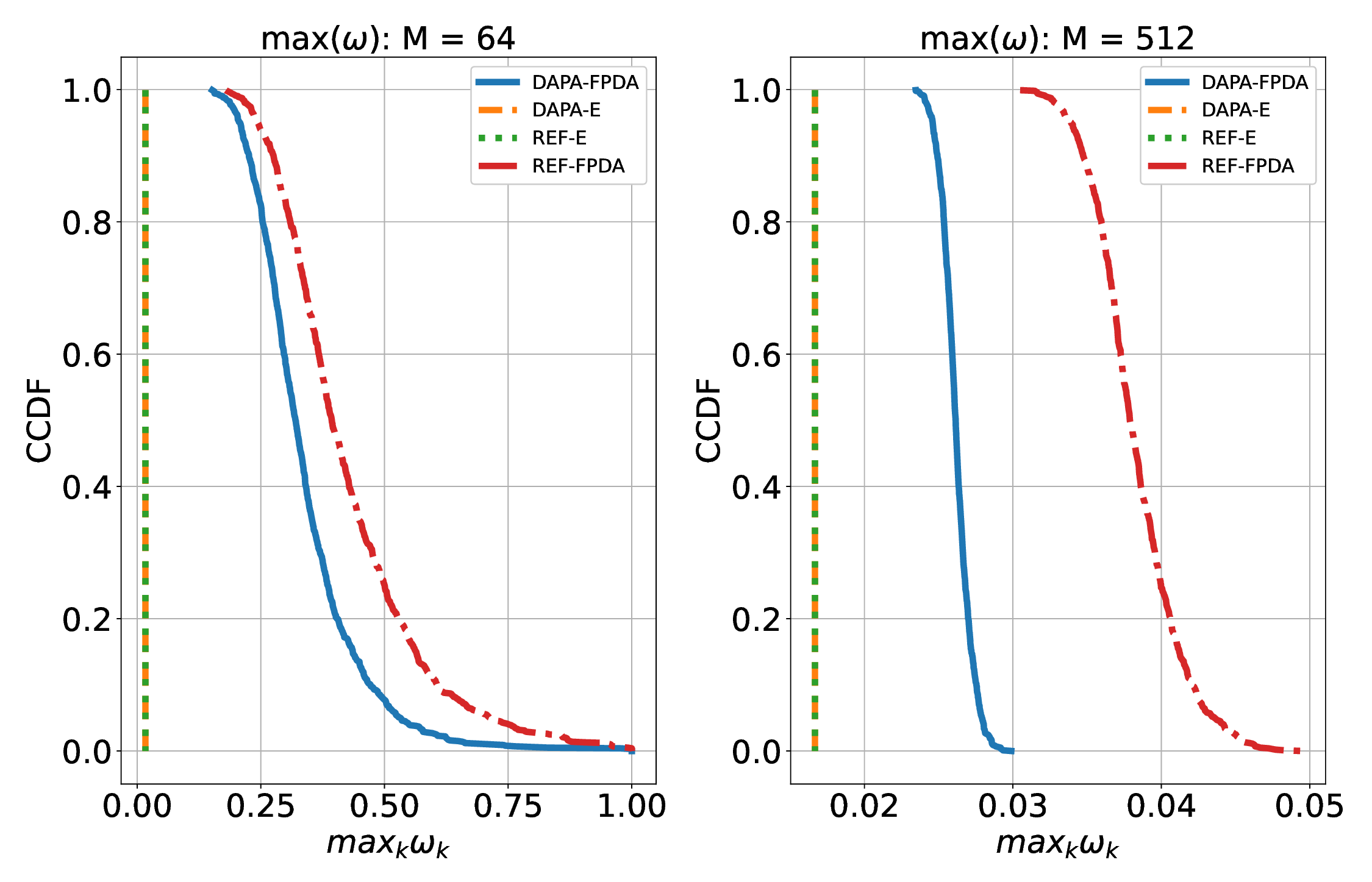}    
    \caption{CCDF of $\max_k \omega_k$ for $K=60$ UEs of random locations in a 2 km radius cell.}
    \label{fig:ccdf_max_omega}
\end{figure}

\begingroup
\color{blue}
\subsection{Impact of non-predistorted PA}
\label{subsec:rapp_compare}
While the soft limiter PA, used both in the system model and optimization, is an ultimate characteristic of a PA, possibly obtained if DPD is performed \cite{raich2005_optimal_nonlinearity}, it is possible that the M-MIMO  system will operate under less idealistic PA without DPD. This can be modeled by the Rapp model with smoothness parameter $p=2$ \cite{ochiai2013analysis}. In this subsection, while the resource allocation assumes a soft limiter (or Rapp model of $p\to \infty$), the achievable sum-rate is evaluated using Rapp model with $p=2$. This requires numerical calculation of $\lambda$ and $D$ using integrals defined in \eqref{eq_distortion_SISO_Rapp} and \eqref{eq_lambda_Rapp}.

\begin{figure}[!t]     
\centering
\includegraphics[width=0.95\columnwidth]{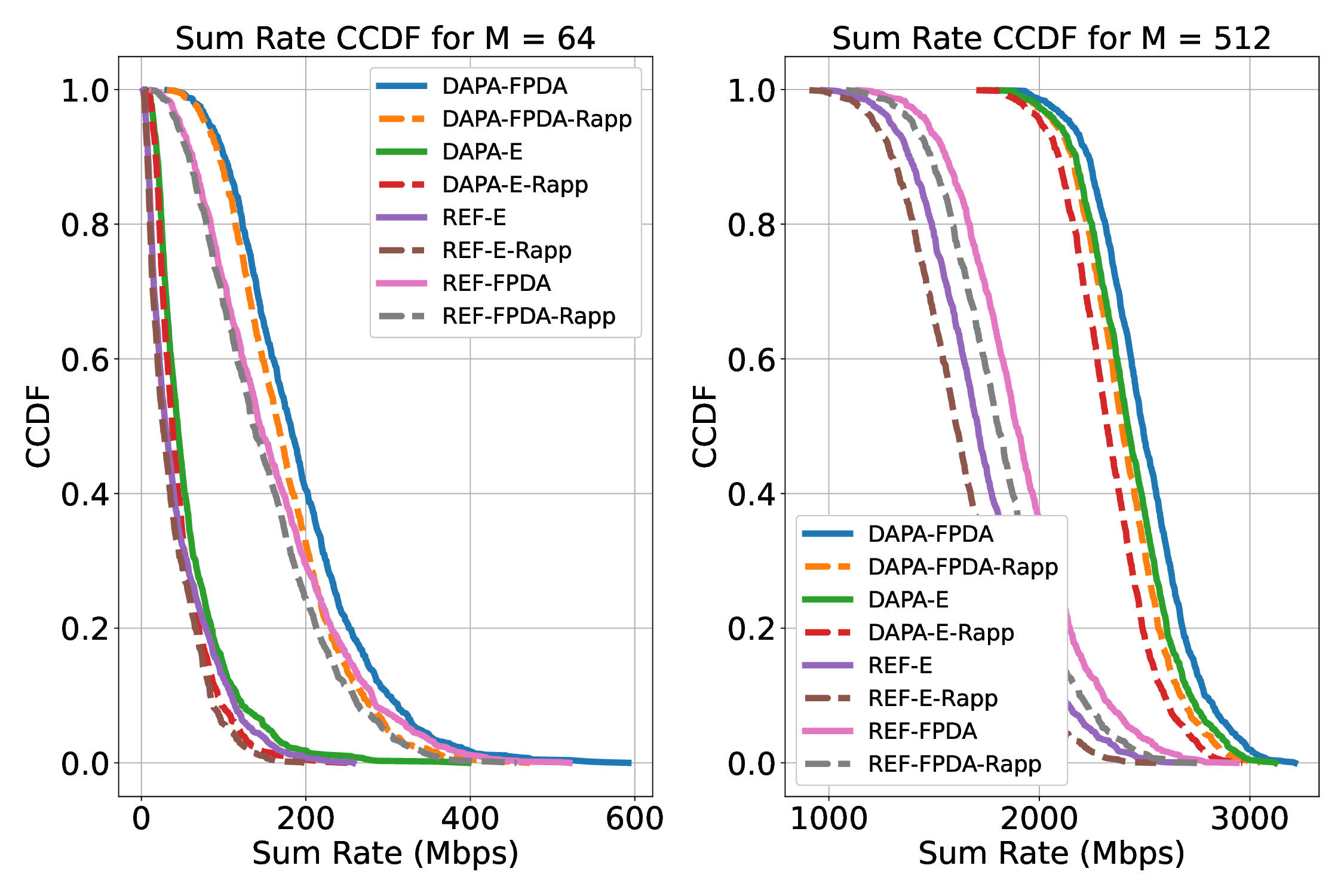} 
    \caption{Comparison between CCDFs of sum-rate for sof-limiter and Rapp model with $p=2$ for $K = 60$ UEs of random locations in a 2 km radius cell.}
    \label{fig:CCDF_rate_Rapp}
\end{figure}
Fig. \ref{fig:CCDF_rate_Rapp} shows the sum-rate CCDF for $M=64$ and $M=512$ antennas, where the same scenario as described in Sec. \ref{subsec:multi_ue_het_loss} is simulated. For each algorithm provided in Table \ref{tab: algos}, two PA cases have been considered: \emph{A)} when PA is modeled via a soft limiter\textemdash giving results identical to the ones from Sec. \ref{subsec:multi_ue_het_loss}, marked with solid lines, and \emph{B)} when PA is modeled via Rapp model\footnote{\textcolor{blue}{Note that all evaluated algorithms assume soft-limiter for the power allocation as Rapp model is not analytically tractable, which is discussed in Sec. \ref{subsec: PA-NL}.}}\textemdash marked with dashed lines. It can be observed that in all the cases the achieved sum-rate is degraded by a small percent. While this can be result of the optimization considering soft-limiter PA model, it is also caused by lower SDR values achieved when PA characteristics other than soft-limiter\cite{raich2005_optimal_nonlinearity} are employed. Most importantly, not only is the observed sum-rate degradation relatively low, but the order of algorithms also does not change, i.e. DAPA-FPDA still outperforms all other analyzed allocation schemes. Therefore, even under PA characteristics more severe than the soft-limiter in terms of nonlinear distortion, it is beneficial to perform the allocation utilizing DAPA-FPDA.

\subsection{Impact of Imperfect Channel State Information (CSI)}
\label{subsec:impCSI}

We quantify the impact of imperfect CSI on the performance of the proposed distortion-aware power allocation framework by comparing the sum-rate achieved under perfect and imperfect CSI assumptions, where an additive error model is adopted. Assuming that $\boldsymbol{h}_{k,n} \in \mathbb{C}^{M\times1}$ represent the true DL channel vector on sub-carrier $n$ for UE $k$, whereas the BS employs the imperfect channel estimate  $\boldsymbol{\hat{h}_{k,n}}$, the relation between the true and estimated channels can be represented as \cite{massivemimobook, 8641436}
\begin{equation}
    \boldsymbol{h} _{k,n} = \boldsymbol{\hat{h}}_{k,n} - \boldsymbol{e}_{k, n}, 
\end{equation}
where, $\boldsymbol{e}_{k, n} \sim \mathcal{CN}(0,\,\delta_k \beta_k\boldsymbol{I}_M)$ is the channel estimation error, $\delta_k \in (0, 1)$ is the normalized error factor  
and $\boldsymbol{\hat{h}}_{k,n} \sim \mathcal{CN}(0, (1-\delta_k)\beta_k\boldsymbol{I}_M)$ is the MMSE based estimated channel. Adopting this $\delta_k$-based model, the imperfect CSI SINDR for UE k becomes 
\begin{equation}
    \gamma_k^{\mathrm{ZF_{iCSI}}} =  \frac{(M - K)\cdot \lambda \cdot p_{k}\cdot \beta_{k} (1 - \delta_k)}{\sigma_k^2 + \beta_{k} D +  \lambda \cdot \beta_{k} \cdot \delta
    _k\cdot \sum_{k' \neq k}p_{k'}},
\label{eq: SINR5}
\end{equation}
where the desired signal power scales with $(1 - \delta_k)$ and the residual inter-UE interference scale with $\beta_{k}\delta_k$. The inter-UE interference appears as a result of imperfect zero forcing. 

In the following, to evaluate the robustness of the proposed alternating DAPA-FPDA framework, we adopt the same scenario as described in Sec. \ref{subsec:multi_ue_het_loss}.
For a 2\,km 5G macrocell with 60 UEs and the large-scale fading model 
$L = 36.7 \log_{10}(d) + 22.7 + 26 \log_{10}(f_c)$ at 
$f_c = 3.5\,\mathrm{GHz}$, the channel estimation error with error factor \cite{8641436} 
$\delta_k = \frac{1}{1 + N_p \rho_{\mathrm{ul}}\beta_k}$\textemdash with $N_p$ as the pilot sequence length and $\rho_{ul}$ as uplink SNR\textemdash was computed using $\beta_k = 10^{-L/10}$, $N_p = 60$, $\rho_{ul} = 23$ dBm and bandwidth of $20$ MHz \cite{8641436}. The resulting $\delta_k$ ranges 
from $10^{-3}$ ($\approx -30\,\mathrm{dB}$) for cell-center UEs up to 
$\sim 0.2$ ($\approx -7\,\mathrm{dB}$) at the 2\,km cell edge, with a 
spatial average of $\sim 0.05$--$0.1$ across uniformly distributed UE. 

\begin{figure}[!t]     
\centering
\includegraphics[width=0.95\columnwidth]{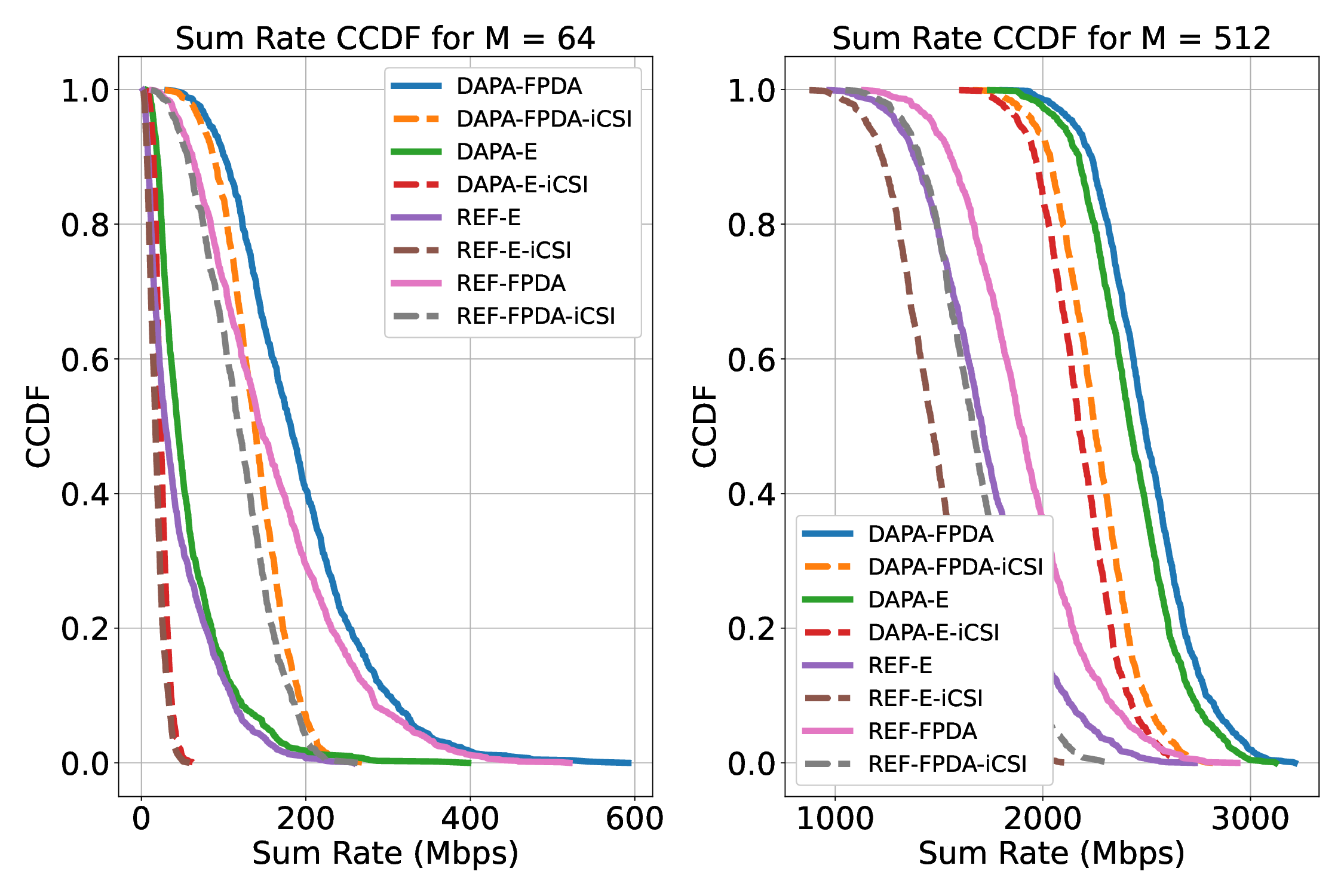} 
    \caption{CCDF of sum-rate for $K = 60$ UEs of random locations in a 2 km radius cell with $\delta_k = \delta = 0.1$ (same for each UE). 
    }
    \label{fig:CCDF_rate_model_1}
\end{figure}
Observe that since most UEs fall in the $0.03$--$0.15$ range and mid-to-cell-edge performance drives system-level behavior, assuming a fixed 
$\delta = 0.1$, corresponding to relatively accurate channel estimates (i.e., 90\% of the channel power is captured in the estimate), for all UEs provides a conservative yet representative approximation of the uplink channel estimation error in this scenario.
With this fixed estimation error factor $\delta = 0.1$ for all UEs, the results, as demonstrated in Fig. \ref{fig:CCDF_rate_model_1}, show that the sum-rate under imperfect CSI follows the same qualitative trends as in the perfect CSI case, however, with a noticeable degradation in magnitude. The performance loss is primarily due to the reduced effective signal gain and the re-emergence of inter-UE interference resulting from the breakdown of perfect ZF under estimation error. Nevertheless, the observed gap confirms that the proposed algorithm retains a substantial portion of its performance even when CSI is imperfect, demonstrating a level of robustness despite being optimized under ideal conditions. These findings highlight the importance of developing CSI-aware extensions, while also validating the utility of the current method as a practical baseline.

\endgroup

\section{Conclusion and Future Work}
\label{sec:conc}

In this work, we have analyzed the impact of non-linear PA characteristics on the sum-rate of a M-MIMO OFDM system employing a wideband signal model and utilizing ZF precoder in the downlink, where the PA is modeled by a soft-limiter model. The proposed power allocation optimization problem does not require the total transmit power to be constrained as the distortion introduced by the PA self-constraints the result. To solve the non-convex optimization problem an alternating optimization framework is used. Our simulation results demonstrate the superior performance of the proposed DAPA-FPDA algorithm relative to existing distortion-neglecting solutions. Most interestingly, in many cases it is optimal to significantly clip the transmitted signal (using negative IBO values) even though the non-linear distortion is treated as white noise. 

\textcolor{blue}{For future work, firstly, an advanced distortion-aware receiver, e.g., \cite{Wachowiak2023}, should be considered, which is able to partially remove the distortion or use nonlinear distortion positively (as a source of frequency diversity). This should result in an even stronger clipping being optimal. Secondly, while this work relies on channel hardening to abstract subcarrier effects under i.i.d. Rayleigh fading, it would be beneficial to generalize the optimization problem to spatially correlated channels. However, a redefinition of the SNDR metric will be required, potentially using the distortion estimation framework from \cite{mollen_spatial_char}. Additionally, scenarios where subcarrier scheduling may improve performance should be considered, which requires incorporating per-subcarrier UE assignment, redefining the SNDR and extending the problem with discrete variables and per-subcarrier distortion modeling.
Thirdly, incorporating minimum rate (QoS) constraints to ensure fairness and service guarantees is also relevant. However, unlike linear SINR systems where Perron–Frobenius theory applies to ensure feasibility, the feasibility analysis in our work becomes intractable due to the non-linear distortion-coupled SINDR.} Finally, the other options are MRT consideration, extension to multi-cell scenario, or energy efficiency maximization. 

\appendices

\section{Proof of Proposition \ref{prop1}}
\label{sec: non-convexity}
This section shows that the sum-rate in \eqref{eq:optprob} is non-convex. It is known from convex optimization theory that if the eigenvalues of the Hessian matrix are non-negative, then the Hessian is positive-semi-definite and the function is convex. 
Whereas, if the eigenvalues are non-positive, then the Hessian is negative-semi-definite and the function is concave  
\cite{boyd2004convex}. However, if the eigenvalues are both positive and negative, then the function is non-convex. Therefore, to show the non-convexity of the sum-rate function, let us consider a two-UE scenario, where the allocated power for UE 1 and 2 are $p_1$ and $p_2$, respectively. Then, the sum-rate can be defined as
\begin{equation}
    \begin{aligned}
      f(\boldsymbol{p}) = B \left( \log_2\left(1 + \frac{(M-2)\lambda p_1 \beta_1}{\sigma^2_1 + \beta_1 D}\right) \right. + \\
       \left. \log_2\left(1 + \frac{(M-2)\lambda p_2 \beta_2}{\sigma^2_2 + \beta_2 D}\right) \right), 
    \end{aligned}
    \label{eq:sum_rate_f}
\end{equation}
where $\lambda$ and $D$ are functions of $\Psi = \frac{M P_{\mathrm{max}}}{p_1 + p_2}$ defined in \eqref{eq_lambda} and \eqref{eq_distortion_SISO}, respectively. The Hessian of $f$ can be computed as
\begin{equation}
H(f) = \begin{bmatrix}
    \frac{\partial^2 f}{\partial p_1^2} & \frac{\partial^2 f}{\partial p_1 \partial p_2} \\
    \frac{\partial^2 f}{\partial p_2 \partial p_1} & \frac{\partial^2 f}{\partial p_2^2} 
\end{bmatrix},
\label{eq:Hessian}
\end{equation}
where the partial derivatives of $f$, i.e. $\frac{\partial f}{\partial p_1}$, $\frac{\partial^2 f}{\partial p_1^2}$ and $\frac{\partial^2 f}{\partial p_1 p_2}$ , can be computed by \eqref{eq:pd_p1}, \eqref{eq:pd_p1_p_1}, and \eqref{eq:pd_p1_p2}, respectively and the partial derivatives $\frac{\partial^2 f}{\partial p_2 p_1}$ and $\frac{\partial^2 f}{\partial p_2^2}$ can be derived similarly.
\begin{equation}
    \begin{aligned}
      \frac{\partial f}{\partial p_1} = B(M-2) \left( \frac{U}{V} + \frac{W}{X}\right), 
    \end{aligned}
    \label{eq:pd_p1}
\end{equation}
where, 

$U = \lambda - p_1 \frac{\partial \lambda}{\partial p_1} - \frac{\lambda p_1 \frac{\partial D}{\partial p_1}}{\frac{\sigma_1^2}{\beta_1} + D}$, $V = \frac{\sigma_1^2}{\beta_1} + D + (M-2)\lambda p_1$, 

$W = p_2\left(\frac{\partial \lambda}{\partial p_1} - \frac{\lambda \frac{\partial D}{\partial p_1}}{\frac{\sigma_2^2}{\beta_2}+D}\right)$, and 
$X = \frac{\sigma_2^2}{\beta_2}+D+(M-2)\lambda p_2$.

\begin{equation}
    \begin{aligned}
      \frac{\partial^2 f}{\partial p_1^2} = B(M-2) \left(\frac{S}{T} - \frac{U}{T^2} +  \frac{W}{X} - \frac{Y}{X^2}
      \right),
    \end{aligned}
    \label{eq:pd_p1_p_1}
\end{equation}
where, 

$S = -p_1 \frac{\partial^2 \lambda}{\partial p_1^2} - \frac{\left(\lambda \frac{\partial D}{\partial p_1} + p_1 \frac{\partial \lambda}{\partial p_1} \frac{\partial D}{\partial p_1} + \lambda p_1 \frac{\partial^2 D}{\partial p_1^2} - \frac{\lambda p_1 \frac{\partial D}{\partial p_1}}{\frac{\sigma_1^2}{\beta_1} + D}\right)}{\frac{\sigma_1^2}{\beta_1} + D}$, 

$T = \frac{\sigma_1^2}{\beta_1} + D + (M-2)\lambda p_1$, 

$U = \left(\lambda -  \frac{p_1 \partial \lambda}{\partial p_1} - \frac{\lambda p_1 \frac{\partial D}{\partial p_1}}{\frac{\sigma_1^2}{\beta_1}+D}\right) \left(\frac{\partial D}{\partial p_1} + (M-2) (\lambda +  \frac{p_1 \partial \lambda}{\partial p_1} )\right)$, 

$W = p_2 \left(\frac{\partial^2 \lambda}{\partial p_1^2} - \frac{\frac{\partial \lambda}{\partial p_1} \frac{\partial D}{\partial p_1} + \lambda \frac{\partial^2 D}{\partial p_1^2} + \frac{\lambda (\frac{\partial D}{\partial p_1})^2}{\frac{\sigma_2^2}{\beta_2}+D}}{\sigma_2^2 + D}\right)$, 

$X = \frac{\sigma_2^2}{\beta_2} + D + (M-K)\lambda p_2$, 

$Y = \left(\frac{\partial \lambda}{\partial p_1} - \frac{\lambda \frac{\partial D}{\partial p_1}}{\frac{\sigma_2^2}{\beta_2}+D}\right)\left(\frac{\partial D}{\partial p_1} + (M-2) \frac{\partial \lambda}{\partial p_1} p_2\right)$.

\begin{equation}
    \begin{aligned}
      \frac{\partial^2 f}{\partial p_1 p_2} = B(M-2) \left(\frac{S}{T} - \frac{U}{T^2} +  \frac{W}{X} - \frac{Y}{X^2}
      \right),
    \end{aligned}
    \label{eq:pd_p1_p2}
\end{equation}
where, 

$S = \frac{\partial \lambda}{\partial p_2} - p_1 \frac{\partial \lambda}{\partial p_1 \partial p_2} - \frac{\left(p_1 \frac{\partial \lambda}{\partial p_2} \frac{\partial D}{\partial p_1}+ p_1 \lambda \frac{\partial D}{\partial p_1 \partial p_2} - \frac{p_1 \lambda \frac{\partial D}{\partial p_1} \frac{\partial D}{\partial p_2}}{\frac{\sigma_1^2}{\beta_1}+D}\right)}{\frac{\sigma_1^2}{\beta_1}+D}$, 

$T = \frac{\sigma_1^2}{\beta_1} + D + (M-2)\lambda p_1$, 

$U = \left(\lambda - p_1 \frac{\partial \lambda}{\partial p_1} - \frac{\lambda p_1 \frac{\partial D}{\partial p_1}}{\frac{\sigma_1^2}{\beta_1}+D}\right) \left(\frac{\partial D}{\partial p_2} + (M-2) \frac{\partial \lambda}{\partial p_2}p_1\right)$,

$W = \left(\frac{\partial \lambda}{\partial p_1} - \frac{\lambda \frac{\partial D}{\partial p_1}}{\frac{\sigma^2_2}{\beta_2}+D}\right) +  \\ 
\phantom{\qquad\qquad\qquad} p_2 \left(\frac{\partial \lambda}{\partial p_1 \partial p_2} - \frac{\frac{\partial \lambda}{\partial p_2} \frac{\partial D}{\partial p_1} + \lambda \frac{\partial D}{\partial p_1 \partial p_2} + \frac{\lambda \frac{\partial D}{\partial p_1} \frac{\partial D}{\partial p_2}}{\frac{\sigma_2^2}{\beta_2}+D}}{\sigma_2^2 + D}\right)$, 

$X = \frac{\sigma_2^2}{\beta_2} + D + (M-K)\lambda p_2$, 

$Y = p_2\left(\frac{\partial \lambda}{\partial p_1} - \frac{\lambda \frac{\partial D}{\partial p_1}}{\frac{\sigma_2^2}{\beta_2}+D}\right)\left(\frac{\partial D}{\partial p_2} + (M-2)(\lambda + p_2 \frac{\partial \lambda}{\partial p_2})\right)$.

\medskip
To show that \eqref{eq:sum_rate_f} is non-convex, it is sufficient to find one such scenario resulting in a positive and a negative eigenvalue of the Hessian matrix in \eqref{eq:Hessian}, where the eigenvalues of $H$ can be found by solving $|H - \xi I| = 0$, where $|\;|$ represents the determinant, $I$ is the identity matrix of the same order as the Hessian $H$ and $\xi$ represents the eigenvalues. We compute the Hessian numerically and show that for $P_{\mathrm{max}} = 10$ mW, $\eta = \frac{2}{3}$, $M =  64$, $B = 18$ MHz, $\sigma_k^2 = \num{5.97e-14}$, $\beta_1 = \num{1e-11}$, and $\beta_2 = \num{1e-7}$, the eigenvalues are $\xi_1 = \num{-1.44e1}$ and $\xi_2 = \num{8.33e4}$ and conclude that the sum-rate in \eqref{eq:optprob} is non-convex.

\section{Proof of Proposition \ref{prop:equivalence} }
\label{sec:equivalence}
In this section we provide a detailed proof showing that the solutions to \eqref{eq:TPA} and \eqref{eq:optprob5} find a stationary point of \eqref{eq:optprob3}, which corresponds to finding the stationary point of \eqref{eq:optprob}. 

First, we provide the KKT conditions for the problem in \eqref{eq:optprob3}. Let us define the goal function of \eqref{eq:optprob3} as 
\begin{equation}
   f(\boldsymbol{p}) =  \sum_k B \log_2\left(1 + \frac{(M - K) \cdot \lambda \cdot \omega_{k} P \cdot \beta_{k}}{\sigma_k^2 + \beta_{k} D} \right),
\end{equation}
where $\boldsymbol{p} =[\omega_1P,\dots,\omega_KP]$. The Lagrangian function for the problem in \eqref{eq:optprob3} can be written as 
\begin{alignat}{2}
        L(\boldsymbol{\omega}, P, \alpha,  \boldsymbol{\mu}, \nu) =& f(\omega_1 P, \omega_2 P, \ldots, \omega_K P) + \alpha (-P)  \nonumber \\
        & + \sum_{k=1}^K \mu_k(-\omega_k) + \nu (1 -\sum_k \omega_k).
        \label{eq_Lagrangian_both_variables}
\end{alignat}
Assuming $\boldsymbol{\mu}^* = [\mu_1^*,\dots, \mu_K^*]$, $\alpha^*$, $\nu^*$ and $\boldsymbol{p}^* = [\omega_1^*P^*,\dots, \omega_K^*P^*]$ are the optimal solutions for the dual and primal problem, the primal feasibility, dual feasibility and complementary slackness conditions, respectively, for all $k$ are:
    \begin{align}
    &\text{C1}: w^*_k \geq 0, P^* \geq 0, \text{C2}: \mu^*_k \geq 0, \alpha^* \geq 0,
    \nonumber
    \\
    &\text{C3}: \mu^*_k \omega^*_k = 0, \alpha^* P^* = 0,  \nu^* (\sum_k \omega^*_k - 1) = 0. 
    \label{eq_KKT_full}
    \end{align}
The last KKT condition, stationarity, states that $\frac{\partial L}{\partial \omega^*_k}= 0$ and $\frac{\partial L}{\partial P^*}= 0$.
 Then, by using chain rule of derivative 
\begin{equation}
    \frac{\partial  L}{\partial \omega_k^{*}}=\frac{\partial f}{\partial p_k^{*}} \frac{\partial p_k^{*}}{\partial \omega_k^{*}}-\mu_k^{*} - \nu^*=\frac{\partial f}{\partial p_k^{*}}P^* -\mu_k^{*} - \nu^*=0.
    \label{eq:stationarity1}
\end{equation}
Following the same approach, the stationary condition with respect to $P^*$ is
\begin{equation}
    \frac{\partial L}{\partial P^{*}} = \sum_{k=1}^K  \frac{\partial f}{\partial p_k^{*}} \omega^*_k - \alpha^* = 0.
     \label{eq:stationarity22}
\end{equation}

Next, we provide the KKT conditions separately for the DAPA and FPDA sub-problems, which are solved through an alternating optimization framework. 

First, observe that the Lagrangian function for the DAPA and FPDA sub-problems in \eqref{eq:TPA} and \eqref{eq:optprob5} can be written as 
\begin{equation}
    L_p(P,\alpha) = f(\omega_1 P, \omega_2 P, \ldots, \omega_K P) -  {\alpha} P 
\end{equation}
and  
\begin{alignat}{2}
    L_{\boldsymbol{\omega}}(\boldsymbol{\omega}, \nu,\boldsymbol{\mu}) =& f(\omega_1 P, \omega P, \ldots, \omega_K P)  \nonumber \\ & +{\nu} (1 - \sum_{k=1}^K \omega_k) - \sum_{k=1}^K {\mu_k} \omega_k, 
\end{alignat}
respectively. Then, let us assume that $\tilde{\alpha} $ and $\tilde{P}$ are the optimal solutions for the dual and primal problem DAPA. Similarly, $\tilde{\omega}_k$, $\tilde{\nu}$, $\tilde{\mu_k}$ are the optimal solutions for the FPDA sub-problem. Based on these, the optimal k-th UE power can be defined as $\tilde{p}_k=\tilde{\omega}_k \tilde{P}$. 
Next, for the DAPA sub-problem, the KKT conditions can be stated as
\begin{subequations}	
    \begin{align*}
    &\text{E1}: \tilde{P} \geq 0, \text{E2}: \tilde{\alpha} \geq 0, \text{E3}: \tilde{\alpha} \tilde{P} = 0,
    \end{align*}
\end{subequations}
with the stationarity condition being
\begin{equation}
     \frac{\partial L_p}{\partial \tilde{P}} = \sum_{k=1}^K  \frac{\partial f}{\partial \tilde{p}_k }\tilde{\omega}_k
    - \tilde{\alpha} = 0.
    \label{eq:stationarity3}
\end{equation}
Similarly, for the FPDA sub-problem, the KKT conditions can be stated as
\begin{alignat}{2}
    &\text{F1}: \tilde{\omega}_k \geq 0 \; \forall k, \; \text{F2}: \sum_{k=1}^{K} \tilde{\omega}_k - 1 =0, \text{F3}: \tilde{\mu}_k \geq 0 \; \forall k; \nonumber \\ 
    &\text{F4}: \tilde{\mu}_k \omega_k = 0 \; \forall k;  \tilde{\nu} (1 -\sum_k \tilde{\omega}_k) = 0 \nonumber
\end{alignat}
and the stationary condition F5 yields
\begin{equation}
    \frac{\partial  L_{\boldsymbol{\omega}}}{\partial \tilde{\omega}_k}=\frac{\partial f}{\partial \tilde{p}_k}\tilde{P} - \tilde{\nu} - \tilde{\mu}_k = 0.
    \label{eq:stationarity4}
\end{equation}
Note that we assume that both DAPA and FDPA sub-problems are at their solution, which is obtained by alternating optimization, so that in DAPA and FPDA the KKT conditions $\tilde{\omega}_k$ and $\tilde{P}$ are used, respectively. 

Let us state the hypothesis that primal and dual solutions of FPDA and DAPA sub-problems meet all the KKT conditions of \eqref{eq:optprob3} defined in (\ref{eq_KKT_full}). In other words, $\tilde{\omega}_k$, $\tilde{P}$, $\tilde{\mu}_k$, $\tilde{\alpha}$, $\tilde{\nu}$ meet (\ref{eq_KKT_full}) if substituted for ${\omega}^{*}_k$, ${P}^{*}$, ${\mu}^{*}_k$, ${\alpha}^{*}$, ${\nu}^{*}$, respectively. 

Based on F1 and E1 it is visible that C1 is met. Similarly, C2 can be proved by using F3 and E2. The complementary slackness  C3 is obtained directly from E3 and F4. As such only stationarity condition is left. As the derivative in (\ref{eq:stationarity4}) equals 0, based on the same structure of (\ref{eq:stationarity1}) its value will be zero as well. Similarly, as it is assumed that derivative in (\ref{eq:stationarity3}) equals 0, the same is achieved for (\ref{eq:stationarity22}). 

This proves the hypothesis that if the primal and dual solutions are obtained in DAPA and FDPA via alternating optimization, the solution also meets KKT conditions of \eqref{eq:optprob3}.

\section{Proof of Lemma \ref{lem:DAPA_1}}
\label{sec:proof_lemma_1}

The partial derivative $\frac{\partial \tilde{R}_k}{\partial P}$ can be calculated as
\begin{align}
\label{eq_der_Rk}
    \frac{\partial \tilde{R}_k}{\partial P}=&
    \frac{B}{\log(2) \left( 
    1+\frac{(M-K)\lambda\omega_k P \beta_k}{\sigma_k^2+\beta_k D}
    \right)
    }
    \frac{(M-K)\omega_k\beta_k}{\left(\sigma_k^2+\beta_k D\right)^2 }
    \nonumber
    \\&
    \cdot\left(
    \left(
    \frac{\partial \lambda}{\partial P}P
    +\lambda
    \right)
    \left( \sigma_k^2+\beta_k D \right)
    -\lambda P \beta_k \frac{\partial D}{\partial P}
    \right),
\end{align}
where 
\begin{align}
    \frac{\partial \lambda}{\partial P}=&
    -\frac{\Psi\sqrt{\lambda}}{P}
    \left( 
    e^{-\Psi}+\frac{1}{2}\sqrt{\frac{\pi}{\Psi}} \textrm{erfc} \left( \sqrt{\Psi} \right)
    \right),
\end{align}
and 
\begin{align}
     \frac{\partial D}{\partial P}=&
     \eta \left(1-  e^{-\Psi}-\lambda 
     -P \frac{\partial \lambda}{\partial P}-\Psi e^{-\Psi} \right),
\end{align}
recalling that IBO is redefined as $\Psi=MP_{max}/ P $.
After substitution and some operations on \eqref{eq_der_Rk}, $\frac{\partial \tilde{R}_k}{\partial P}$ becomes 
\begin{subequations}
\label{eq_derivative_Rk}
    \begin{align}
    \frac{\partial \tilde{R}_k}{\partial P}=&
    \frac{B}{\log(2) \left( 
    1+\frac{(M-K)\lambda\omega_k P \beta_k}{\sigma_k^2+\beta_k D}
    \right)
    }
    \frac{(M-K)\omega_k\beta_k}{\left(\sigma_k^2+\beta_k D\right)^2 }
    \label{eq_derivative_Rk_1}
    \\&
    \cdot \sqrt{\lambda}
    \left(1-e^{-\Psi}-\Psi e^{-\Psi} \right)
    \label{eq_derivative_Rk_2}
    \\&
    \left( \sigma_k^2-\frac{\sqrt{\pi}}{2}\beta_k \eta M P_{max} \frac{\textrm{erfc}(\sqrt{\Psi})}{\sqrt{\Psi}}  \right).
    \label{eq_derivative_Rk_3}
\end{align}
\end{subequations}
Observe that all variables $B,\sigma_k^2,\omega_k,P,\beta_k$ are non-negative. The same holds for $D$ as a property of power. Since the number of antennas must be greater than the number of UEs when employing ZF, $(M-K)$ is always positive. Finally, $\lambda$ is always non-negative by observation of squaring operation in (\ref{eq_lambda}). As such,  (\ref{eq_derivative_Rk_1}) creates a non-negative factor.  
Furthermore, \eqref{eq_derivative_Rk_2} is non-negative as well. Observe that $\sqrt{\lambda}$ is positive by the definition in \eqref{eq_lambda}. Next, the derivative of $1-e^{-\Psi}- \Psi e^{-\Psi}$ over $P$ equals $-\Psi^{2}e^{-\Psi}/P$, which is always a non-positive function of $P$. In other words, $1-e^{-\Psi}-\Psi e^{-\Psi}$ is monotonically decreasing function of $P$. The limit
    \begin{equation}
    \lim_{P \to \infty} 1-e^{-\Psi}-\Psi e^{-\Psi}=0,
    \end{equation}
implies that $1-e^{-\Psi}-\Psi e^{-\Psi}$ is non-negative as well. 
By taking only the (\ref{eq_derivative_Rk_3}) component and dividing it by a positive real number $\frac{\sqrt{\pi}}{2}\beta_k \eta M P_{max}$, a function $f_k(P)$ in (\ref{eq_f_k}) is defined. Therefore, the function $\frac{\partial \tilde{R}_k}{\partial P}$ has a root only when $f_k(P)=0$. 

The first derivative of $f_k(P)$ over $P$ can be calculated as 
\begin{equation}
    f'_{k}(P)=-\frac{\sqrt{\Psi}\textrm{erfc}(\sqrt{\Psi})}{2M P_{max}}-\frac{e^{-\Psi}}{\sqrt{\pi}P}.
    \label{eq_f_k_prim}
\end{equation}
Observe that $\textrm{erfc}(~)$ ranges from 1 to 0 for its positive arguments. Therefore, the function $f'_{k}(P)$, as a result of both its component being non-positive over the entire range of $P$ is non-positive itself. Since
\begin{equation}
    \lim_{P \to 0}\frac{\textrm{erfc}(\sqrt{\Psi})}{\sqrt{\Psi}}=0
\end{equation}
and
\begin{equation}
    \lim_{P \to \infty}\frac{\textrm{erfc}(\sqrt{\Psi})}{\sqrt{\Psi}}=\infty, 
    \label{eq_lim_f_k_prim}
\end{equation}
it means that the function $f_{k}(P)$ starts (for $P=0$) from value $\frac{2\sigma_k^2}{\sqrt{\pi}\beta_k \eta M P_{max}}$ and decreases till $-\infty$ for $P\to \infty$.
This means that there is only a single root and the function $f_k(P)$ is monotonically decreasing. 

\section{Proof of Lemma \ref{lemma_upper_lower_root}}
\label{sec:DAPA_proof_2}

It can be shown that the $\textrm{erfc}(x)$ function for non-negative arguments (as in our case in \eqref{eq_f_k})  has an upper bound $e^{-x^2}$ \cite{Chiani_erfc_2003} and lower bound $\sqrt{\frac{e}{2\pi}}e^{-2x^2}$\cite{Chang_erfc_2011}. Using the $\textrm{erfc}$ upper bound, the lower bound of function $f_k(P)$ can be defined as
\begin{equation}
f^{lower}_{k}(P)=\frac{2\sigma_k^2}{\sqrt{\pi}\beta_k \eta M P_{max}}-\frac{e^{-\Psi}}{\sqrt{\Psi}}.
\label{eq_f_k_lower}
\end{equation}
Since the function $f_k(P)$ is monotonically decreasing (see Lemma \ref{lem:DAPA_1}), it is expected that the root of its lower bound, i.e. the function $f^{lower}_{k}(P)$, will be smaller or equal to the root of $f_k(P)$, i.e., $\tilde{P}^{lower}_{k}\leq \tilde{P}_{k}: f^{lower}_{k}(\tilde{P}^{lower}_{k})=0 \land f_{k}(\tilde{P}_{k})=0$.
Fortunately, the root of the function $f^{lower}_{k}(P)$ can be found analytically, resulting in \eqref{eq_root_P_lower}.

A similar discussion can be carried out for the lower bound of \textrm{erfc}. It can be used to define an upper bound of $f_k(P)$ as

\begin{equation}
f^{upper}_{k}(P)=\frac{2\sigma_k^2}{\sqrt{\pi}\beta_k \eta M P_{max}}-\sqrt{\frac{e}{2\pi}}\frac{e^{-2\Psi}}{\sqrt{\Psi}}.
\label{eq_f_k_upper}
\end{equation}
Its analytical solution (\ref{eq_root_P_upper}) will be an upper bound for $\tilde{P}_{k}$.

\section{Proof of Lemma \ref{lem:DAPAD}}
\label{sec:DAPAD_proof}
To achieve the optimal total power distribution, KKT conditions, which provide the necessary conditions for the optimality of \eqref{eq:optprob5}, are applied. For a fixed $P$, the Lagrange dual function of \eqref{eq:optprob5} is defined as 
\begin{align}	L(\boldsymbol{\omega},\boldsymbol{\mu},\nu)=&-\sum_{k=1}^{K} B \log_2 \left(1 + \frac{(M-K)\lambda\omega_k P \beta_k}{\sigma^2_k + \beta_k D}\right) + \nonumber \\
&\sum_{k=1}^{K} \mu_k \left( - \omega_k \right) + 
\nu \left( 1 -\sum_{k=1}^{K} \omega_k  \right),
\end{align}
where $\boldsymbol{\mu}=[\mu_k], \forall k=1,\dots,K$ is the vector of Lagrange multipliers for the inequality constraints $\boldsymbol{\omega \geq 0}$, and $\nu$ is the Lagrange multiplier for the equality constraint in \eqref{eq:pc3}.

Let us assume that $\boldsymbol{\mu}^*=[\mu^*_k], \forall k=1,\dots,K$ and $\nu^*$ are the optimal solutions of the dual problem and $\boldsymbol{\omega}^*=[\omega^*_k], \forall k=1,\dots,K$ is the primal solution.
The KKT conditions can be listed as follows:
\begin{subequations}	
    \begin{align*}
    &\text{C1}:  \omega^*_k \geq 0, \forall k, \text{C2}: \sum_{k=1}^{K} \omega^*_k - 1 = 0, \\
    & \text{C3}: \mu^*_k \geq 0, \forall k,, \text{C4}: \mu^*_k (\omega^*_k) = 0, \forall k, \\
    &\text{C5}: \nabla_{\boldsymbol{\omega}^*} \left(-\sum_{k=1}^{K} B \log_2 \left(1 + \frac{(M-K) \lambda \omega_k^* P \beta_k}{\sigma^2_k + \beta_k D}\right)\right) \\
    &\quad - \sum_{k=1}^{K} \mu_k \nabla_{\boldsymbol{\omega}^*} (\omega^*_k) + \nu^* \nabla_{\boldsymbol{\omega}^*} \left(1- \sum_{k=1}^{K} \omega^*_k  \right) = 0.
    \end{align*}
\end{subequations}
Note that the condition C5, after computing the gradient,  can be reformulated as
$$\text{C5}:-\frac{B}{\frac{\sigma_k^2+ \beta_kD}{(M-K)\lambda P\beta_k} + \omega^*_k} - \mu^*_k + \nu^* = 0,~\forall k=1,\dots,K $$
and the set of KKT conditions C1-C5 can be solved directly to find $\boldsymbol{\omega}^*$, $\boldsymbol{\mu}^*$, and $\nu^*$. Observing C3, C4 and C5, the equivalent KKT conditions can be written as 
\begin{subequations}
    \begin{align*}
	&\text{E1}:\omega^*_k \geq 0,\forall k,~~~~~\text{E2}:\sum_{k=1}^{K} \omega^*_k = 1,\\
    &\text{E3}:\left(\nu^* -\frac{B}{\frac{\sigma_k^2+ \beta_kD}{(M-K)\lambda P\beta_k} + \omega_k^*} \right) \omega^*_k=0,\forall k,
	\\
	&\text{E4}:\nu^* \geq \frac{B}{\frac{\sigma_k^2+ \beta_kD}{(M-K)\lambda P\beta_k} + \omega_k^*} ,~\forall k=1,\dots,K,
	\end{align*}
 \end{subequations}
where E1 and E2 are equal to C1 and C2, respectively, and E3 is obtained by solving for $\mu_k^*$ in C5 and substitution in C4. 

Observe that the equivalent KKT conditions E1-E4 are independent of $\mu_k^*$ and therefore $\boldsymbol{\omega}^*$ can be computed as a function of $\nu^*$. Therefore, according to the first term 
in E3, we investigate when $\nu^* < \frac{B}{\frac{\sigma_k^2+ \beta_kD}{(M-K)\lambda P\beta_k}}$ and $\nu^* \geq \frac{B}{\frac{\sigma_k^2+ \beta_kD}{(M-K)\lambda P\beta_k}},~\forall k=1,\dots,K$ 
to find the optimal solution $\omega_k^*$ and conclude that the optimal solution $\omega_k^*$ is
\[   
\omega^*_k=
    \begin{cases}
        \frac{B}{\nu^*} - \frac{\sigma^2_k + \beta_kD}{(M-K)\lambda P\beta_k} &\quad \nu^* < \frac{B}{\frac{\sigma_k^2+ \beta_kD}{(M-K)\lambda P\beta_k}},\\
        0 &\quad \nu^* \geq \frac{B}{\frac{\sigma_k^2+ \beta_kD}{(M-K)\lambda P\beta_k}}, \\ 
    \end{cases}
\]
which can be reformulated as 
\begin{align}
\omega^*_k=\max \left\{ 0 , \frac{B}{\nu^*} - \frac{\sigma^2_k + \beta_kD}{(M-K)\lambda P\beta_k} \right\}.
\label{Optpower}
\end{align}

Note that by substituting the expression for $\omega^*_k$ into condition E2, we have $\sum_{k=1}^{K}  \max \left\{ 0 , \frac{B}{\nu^*} - \frac{\sigma^2_k + \beta_kD}{(M-K)\lambda P\beta_k} \right\} = 1$. The left-hand side is a piecewise linear increasing function of $\frac{B}{\nu^*}$, with breakpoints at $\frac{\sigma^2_k + \beta_kD}{(M-K)\lambda P\beta_k}$ for each $k$, so the equation has a unique solution which is readily determined. The optimal $\nu^*$ can be found using the bisection method.  However, optimized solutions also exist, for example in \cite{8995606}.

\ifCLASSOPTIONcaptionsoff
  \newpage
\fi

\bibliographystyle{IEEEtran}
\bibliography{IEEEabrv,bibtex/bib/IEEEexample}

\begin{thebibliography}{10}
\providecommand{\url}[1]{#1}
\csname url@samestyle\endcsname
\providecommand{\newblock}{\relax}
\providecommand{\bibinfo}[2]{#2}
\providecommand{\BIBentrySTDinterwordspacing}{\spaceskip=0pt\relax}
\providecommand{\BIBentryALTinterwordstretchfactor}{4}
\providecommand{\BIBentryALTinterwordspacing}{\spaceskip=\fontdimen2\font plus
\BIBentryALTinterwordstretchfactor\fontdimen3\font minus \fontdimen4\font\relax}
\providecommand{\BIBforeignlanguage}[2]{{%
\expandafter\ifx\csname l@#1\endcsname\relax
\typeout{** WARNING: IEEEtran.bst: No hyphenation pattern has been}%
\typeout{** loaded for the language `#1'. Using the pattern for}%
\typeout{** the default language instead.}%
\else
\language=\csname l@#1\endcsname
\fi
#2}}
\providecommand{\BIBdecl}{\relax}
\BIBdecl

\bibitem{Bjornson_2018_MIMO_unlimited}
E.~Björnson, J.~Hoydis, and L.~Sanguinetti, ``{Massive {MIMO} Has Unlimited Capacity},'' \emph{IEEE Transactions on Wireless Communications}, vol.~17, no.~1, pp. 574--590, 2018.

\bibitem{ochiai2013analysis}
H.~Ochiai, ``{An Analysis of Band-Limited Communication Systems from Amplifier Efficiency and Distortion Perspective},'' \emph{IEEE Transactions on Communications}, vol.~61, no.~4, pp. 1460--1472, 2013.

\bibitem{Kryszkiewicz_TR_2018}
P.~Kryszkiewicz, ``{Amplifier-Coupled Tone Reservation for Minimization of {OFDM} Nonlinear Distortion},'' \emph{IEEE Transactions on Vehicular Technology}, vol.~67, no.~5, pp. 4316--4324, 2018.

\bibitem{Dinis_Qoptimal_RX_2024}
D.~Dinis, D.~Costa, J.~Guerreiro, J.~Madeira, and R.~Dinis, ``{Quasi-Optimum Detection of Nonlinear {OFDM}: Performance Bounds and Receiver Design},'' \emph{IEEE Transactions on Communications}, pp. 1--1, 2024.

\bibitem{Baghani_IMD3_power_allocation_wideband_2014}
M.~Baghani, A.~Mohammadi, M.~Majidi, and M.~Valkama, ``{Analysis and Rate Optimization of {OFDM}-Based Cognitive Radio Networks Under Power Amplifier Nonlinearity},'' \emph{IEEE Transactions on Communications}, vol.~62, no.~10, pp. 3410--3419, 2014.

\bibitem{lee2014characterization}
T.~Lee and H.~Ochiai, ``{Characterization of Power Spectral Density for Nonlinearly Amplified {OFDM} Signals based on Cross-correlation Coefficient},'' \emph{EURASIP Journal on Wireless Communications and Networking}, vol. 2014, pp. 1--15, 2014.

\bibitem{Amirhossein_2019_power_allocation_GFDM}
A.~Mohammadian, M.~Baghani, and C.~Tellambura, ``{Optimal Power Allocation of {GFDM} Secondary Links With Power Amplifier Nonlinearity and {ACI}},'' \emph{IEEE Wireless Communications Letters}, vol.~8, no.~1, pp. 93--96, 2019.

\bibitem{10697405}
X.~Wang, S.~Bi, X.~Li, X.~Lin, Z.~Quan, and Y.-J.~A. Zhang, ``Capacity analysis and throughput maximization of noma with non-linear power amplifier distortion,'' \emph{IEEE Transactions on Wireless Communications}, vol.~23, no.~12, pp. 18\,331--18\,345, 2024.

\bibitem{bjornson2014massive}
E.~Bj{\"o}rnson, J.~Hoydis, M.~Kountouris, and M.~Debbah, ``{Massive {MIMO} Systems with Non-Ideal Hardware: Energy Efficiency, Estimation, and Capacity Limits},'' \emph{IEEE Transactions on information theory}, vol.~60, no.~11, pp. 7112--7139, 2014.

\bibitem{mollen_spatial_char}
C.~Mollén, U.~Gustavsson, T.~Eriksson, and E.~G. Larsson, ``{Spatial Characteristics of Distortion Radiated From Antenna Arrays With Transceiver Nonlinearities},'' \emph{IEEE Transactions on Wireless Communications}, vol.~17, no.~10, pp. 6663--6679, 2018.

\bibitem{Salman_OFDM_MIMO_PA_modelin_2023}
M.~B. Salman, E.~Björnson, G.~M. Güvensen, and T.~Çiloğlu, ``{Analytical Nonlinear Distortion Characterization for Frequency-Selective Massive {MIMO} Channels},'' in \emph{ICC 2023 - IEEE International Conference on Communications}, 2023, pp. 6535--6540.

\bibitem{Liu_2021_MMIMO_impairments}
Z.~Liu, C.-H. Lee, W.~Xu, and S.~Li, ``{Energy-Efficient Design for Massive {MIMO} With Hardware Impairments},'' \emph{IEEE Transactions on Wireless Communications}, vol.~20, no.~2, pp. 843--857, 2021.

\bibitem{persson2013amplifier}
D.~Persson, T.~Eriksson, and E.~G. Larsson, ``{Amplifier-Aware Multiple-Input Multiple-Output Power Allocation},'' \emph{IEEE Communications Letters}, vol.~17, no.~6, pp. 1112--1115, 2013.

\bibitem{Liu_2024_allocation_DMIMO}
B.~Liu and S.~Pollin, ``{Distortion-Aware Power Allocation for Multi-Stream Distributed Massive {MIMO} System With Nonlinear Power Amplifier},'' \emph{IEEE Open Journal of the Communications Society}, vol.~5, pp. 1566--1578, 2024.

\bibitem{hoffmann2023contextual}
M.~Hoffmann and P.~Kryszkiewicz, ``Contextual bandit-based amplifier ibo optimization in massive mimo network,'' \emph{IEEE Access}, vol.~11, pp. 127\,035--127\,042, 2023.

\bibitem{Jee_2021_MISO_PA_power_allocation}
J.~Jee, G.~Kwon, and H.~Park, ``{Joint Precoding and Power Allocation for Multiuser {MIMO} System With Nonlinear Power Amplifiers},'' \emph{IEEE Transactions on Vehicular Technology}, vol.~70, no.~9, pp. 8883--8897, 2021.

\bibitem{Wachowiak2023}
M.~Wachowiak and P.~Kryszkiewicz, ``{Clipping Noise Cancellation Receiver for the Downlink of Massive {MIMO OFDM} System},'' \emph{IEEE Transactions on Communications}, vol.~71, no.~10, pp. 6061--6073, 2023.

\bibitem{Nokia_3gpp_Rapp}
Nokia, ``{Realistic power amplifier model for the New Radio evaluation},'' 3GPP doc. {R4-163314}, May 2016.

\bibitem{raich2005_optimal_nonlinearity}
R.~Raich, H.~Qian, and G.~T. Zhou, ``{Optimization of {SNDR} for Amplitude-Limited Nonlinearities},'' \emph{IEEE Transactions on Communications}, vol.~53, no.~11, pp. 1964--1972, 2005.

\bibitem{Joung_2015_survey_EE_PA}
J.~Joung, C.~K. Ho, K.~Adachi, and S.~Sun, ``{A Survey on Power-Amplifier-Centric Techniques for Spectrum- and Energy-Efficient Wireless Communications},'' \emph{IEEE Communications Surveys \& Tutorials}, vol.~17, no.~1, pp. 315--333, 2015.

\bibitem{Wei_2010_dist_OFDM}
S.~Wei, D.~L. Goeckel, and P.~A. Kelly, ``{Convergence of the Complex Envelope of Bandlimited {OFDM} Signals},'' \emph{IEEE Transactions on Information Theory}, vol.~56, no.~10, pp. 4893--4904, 2010.

\bibitem{Nossek2011power}
A.~Mezghani and J.~A. Nossek, ``{Power Efficiency in Communication Systems from a circuit Perspective},'' in \emph{2011 IEEE International Symposium of Circuits and Systems (ISCAS)}.\hskip 1em plus 0.5em minus 0.4em\relax IEEE, 2011, pp. 1896--1899.

\bibitem{kryszkiewicz2023efficiency}
P.~Kryszkiewicz, ``{Efficiency Maximization for Battery-Powered {OFDM} Transmitter via Amplifier Operating Point Adjustment},'' \emph{Sensors}, vol.~23, no.~1, p. 474, 2023.

\bibitem{massivemimobook}
\BIBentryALTinterwordspacing
E.~Bj\"{o}rnson, J.~Hoydis, and L.~Sanguinetti, ``{Massive {MIMO} Networks: {Spectral}, Energy, and Hardware Efficiency},'' \emph{Foundations and Trends{\textregistered} in Signal Processing}, vol.~11, no. 3-4, pp. 154--655, 2017. [Online]. Available: \url{http://dx.doi.org/10.1561/2000000093}
\BIBentrySTDinterwordspacing

\bibitem{itu-m2135-1-2009}
\BIBentryALTinterwordspacing
{International Telecommunication Union}, ``{Guidelines for Evaluation of Radio Interface Technologies for IMT-Advanced},'' International Telecommunication Union, Tech. Rep. M.2135-1, 2009. [Online]. Available: \url{https://www.itu.int/dms_pub/itu-r/opb/rep/r-rep-m.2135-1-2009-pdf-e.pdf}
\BIBentrySTDinterwordspacing

\bibitem{3gpp_38141}
3GPP, ``"3rd generation partnership project;technical specification group radio access network; nr; base station (bs) conformance testing part 1: Conducted conformance testing (release 18)",'' 3GPP, Tech. Rep. TS 38.141-1 V18.7.0, 2024.

\bibitem{8641436}
K.~Senel, E.~Björnson, and E.~G. Larsson, ``Joint transmit and circuit power minimization in massive mimo with downlink sinr constraints: When to turn on massive mimo?'' \emph{IEEE Transactions on Wireless Communications}, vol.~18, no.~3, pp. 1834--1846, 2019.

\bibitem{boyd2004convex}
S.~Boyd and L.~Vandenberghe, \emph{{Convex Optimization}}.\hskip 1em plus 0.5em minus 0.4em\relax Cambridge university press, 2004.

\bibitem{Chiani_erfc_2003}
M.~Chiani, D.~Dardari, and M.~Simon, ``{New Exponential Bounds and Approximations for the Computation of Error Probability in Fading Channels},'' \emph{IEEE Transactions on Wireless Communications}, vol.~2, no.~4, pp. 840--845, 2003.

\bibitem{Chang_erfc_2011}
S.-H. Chang, P.~C. Cosman, and L.~B. Milstein, ``{Chernoff-Type Bounds for the {Gaussian} Error Function},'' \emph{IEEE Transactions on Communications}, vol.~59, no.~11, pp. 2939--2944, 2011.

\bibitem{8995606}
C.~Xing, Y.~Jing, S.~Wang, S.~Ma, and H.~V. Poor, ``{New Viewpoint and Algorithms for Water-Filling Solutions in Wireless Communications},'' \emph{IEEE Transactions on Signal Processing}, vol.~68, pp. 1618--1634, 2020.

\end{thebibliography}

\end{document}